\def \ps@plain {%
  \let \@mkboth = \@gobbletwo
  \let \@evenhead = \@empty
  \def \@evenfoot {\scriptsize \hfil \thepage \hfil}%
  \let \@oddhead = \@empty
  \let \@oddfoot = \@evenfoot}
\pgfplotsset{compat=1.3}
\newcounter{defs}
\newtheorem{theorem}[defs]{Theorem}
\newtheorem{lemma}[defs]{Lemma}
\newtheorem{assumption}[defs]{Assumption}
\newtheorem{fallacy}[defs]{Fallacy}
\newcommand{\etal}{\emph{et al.}}
\newcommand{\syn}[1]{\mathsf{#1}}
\newcommand{\var}[1]{\mathit{#1}}
\newcommand{\parto}{\rightharpoonup}
\newcommand{\set}[1]{\{#1\}}
\newcommand{\PowSm}[1]{{\mathcal{P}(#1)}}
\newcommand{\join}{\sqcup}
\newcommand{\opor}{\mathrel{|}}
\newcommand{\produces}{\mathrel{::=}}
\newcommand{\vv}{x}
\newcommand{\lam}{\ensuremath{\var{lam}}}
\newcommand{\ttlp}{\mbox{\tt (}}
\newcommand{\ttrp}{\mbox{\tt )}}
\newcommand{\appform}[2]{\ttlp #1\; #2\ttrp}
\newcommand{\lamform}[2]{\ttlp \uplambda\;\ttlp#1\ttrp\;#2\ttrp}
\newcommand{\letiform}[3]{\ttlp {\tt let}\; \ttlp\texttt{\lbrack}#1\; #2\texttt{\rbrack}\ttrp\; #3\ttrp}
\newcommand{\fexpr}{f}
\newcommand{\expr}{e}
\newcommand{\aexpr}{\mbox{\sl {\ae}}}
\newcommand{\State}{\Sigma}
\newcommand{\state}{\varsigma}
\newcommand{\store}{\sigma}
\newcommand{\env}{\rho}
\newcommand{\clo}{\var{clo}}
\newcommand{\cont}{\kappa}
\newcommand{\addr}{a}
\newcommand{\sa}[1]{\widehat{\mathit{#1}}}
\newcommand{\aArgEval}{{\hat{\mathcal{A}}}}
\newcommand{\aState}{{\hat{\Sigma}}}
\newcommand{\astate}{{\hat{\varsigma}}}
\newcommand{\astore}{{\hat{\sigma}}}
\newcommand{\aenv}{{\hat{\rho}}}
\newcommand{\aclo}{{\widehat{\var{clo}}}}
\newcommand{\acont}{{\hat{\kappa}}}
\newcommand{\aaddr}{{\hat{\addr}}}
\newcommand{\aalloc}{{\sa{alloc}}}
\newcommand{\defas}{\triangleq}
\newcommand{\wf}{\mathit{wf}}
\newcommand{\cTo}{\leadsto_{\scalebox{0.45}{$\Sigma$}}}
\newcommand{\cToW}{\leadsto_{\scalebox{0.45}{$S$}}}
\newcommand{\fsTo}{\leadsto^{\text{\texttildelow}}_{\scalebox{0.45}{$\Sigma$}}}
\newcommand{\pdTo}{\leadsto^{\scalebox{0.45}[0.25]{$\wedge$}}_{\scalebox{0.45}{$\Sigma$}}}
\newcommand{\fsToW}{\leadsto^{\text{\texttildelow}}_{\scalebox{0.45}{$\Xi$}}}
\newcommand{\pdToW}{\leadsto^{\scalebox{0.45}[0.25]{$\wedge$}}_{\scalebox{0.45}{$\Xi$}}}
\newcommand{\fsToS}{\leadsto^{\text{\texttildelow}}_{\scalebox{0.45}{$S$}}}
\newcommand{\pdToS}{\leadsto^{\scalebox{0.45}[0.25]{$\wedge$}}_{\scalebox{0.45}{$S$}}}
\newcommand{\pdToSub}{\overset{\scalebox{0.45}[0.45]{$\!\!\!\!\!\sqsubseteq$}}{\leadsto^{\scalebox{0.45}[0.25]{$\wedge$}}_{\scalebox{0.45}{$\Sigma$}}}}
\newcommand{\fsToSub}{\overset{\scalebox{0.45}[0.45]{$\!\!\!\!\!\sqsubseteq$}}{\leadsto^{\text{\texttildelow}}_{\scalebox{0.45}{$\Sigma$}}}}
\newcommand{\fsPath}{\overset{\!\scalebox{0.55}[0.55]{\texttildelow}}{\hookrightarrow}}
\newcommand{\pdPath}{\overset{\!\scalebox{0.45}[0.25]{$\wedge$}}{\hookrightarrow}}
\newcommand{\simXi}{\sqsupseteq_{\scalebox{0.5}{$\Xi$}}}
\newcommand{\simR}{\sqsupseteq_{\scalebox{0.45}{$R$}}}
\newcommand{\simEnv}{\equiv_{\scalebox{0.45}{$Env$}}}
\newcommand{\simStore}{\sqsupseteq_{\scalebox{0.45}{$Store$}}}
\newcommand{\simD}{\sqsupseteq_{\scalebox{0.45}{$D$}}}
\newcommand{\simClo}{\equiv_{\scalebox{0.45}{$Clo$}}}
\newcommand{\simFrame}{\equiv_{\scalebox{0.45}{$Frame$}}}
\newcommand{\tEnv}{T_{\scalebox{0.45}[0.45]{$Env$}}}
\newcommand{\tClo}{T_{\scalebox{0.45}[0.45]{$Clo$}}}
\newcommand{\tD}{T_{\scalebox{0.45}[0.45]{$D$}}}
\newcommand{\tStore}{T_{\scalebox{0.45}[0.45]{$Store$}}}
\newcommand{\tFrame}{T_{\scalebox{0.45}[0.45]{$Frame$}}}
\newcommand{\hEnv}{H_{\scalebox{0.45}[0.45]{$Env$}}}
\newcommand{\hClo}{H_{\scalebox{0.45}[0.45]{$Clo$}}}
\newcommand{\hD}{H_{\scalebox{0.45}[0.45]{$D$}}}
\newcommand{\hStore}{H_{\scalebox{0.45}[0.45]{$Store$}}}
\newcommand{\hFrame}{H_{\scalebox{0.45}[0.45]{$Frame$}}}
\newcommand{\hKont}{H_{\scalebox{0.45}[0.45]{$Kont$}}}
\newcommand{\hC}{H_{\scalebox{0.45}[0.45]{$C$}}}
\newcommand{\inpsi}{\in_{\scalebox{0.7}[0.7]{$\psi$}}}
\begin{document}

\setlength{\pdfpageheight}{\paperheight}
\setlength{\pdfpagewidth}{\paperwidth}

\conferenceinfo{PoPL '16}{January 20--22, 2016, St. Petersburg, FL, USA} 
\copyrightyear{2016}
\copyrightdata{}
\thedoi{2837614.2837631}




\authorversion{2016}{\textit{POPL ’16: Proceedings of the 43rd annual ACM SIGPLAN Symposium on Principles of Programming Languages}}{January 2016}{2837614.2837631}


\title{Pushdown Control-Flow Analysis for Free}


\authorinfo{\iffalse\color{white}\fi Thomas Gilray \and Steven Lyde \and Michael D. Adams \and Matthew Might \and David Van Horn}
           {University of Utah, USA \and\and University of Maryland, USA}
           {\{tgilray,lyde,adamsmd,might\}@cs.utah.edu,\ \ dvanhorn@cs.umd.edu}

\maketitle

\begin{abstract}
Traditional control-flow analysis (CFA) for higher-order languages
introduces spurious connections between callers and callees, and
different invocations of a function may pollute each other's return flows. 
Recently, three distinct approaches have been published that provide perfect call-stack precision in a computable manner: CFA2, PDCFA, and AAC.
Unfortunately, implementing CFA2 and \linebreak PDCFA requires significant engineering effort.
Furthermore, all three are computationally expensive.
For a monovariant analysis, CFA2 is in $O(2^n)$, PDCFA is in $O(n^6)$, and AAC is in $O(n^8)$.

In this paper, we describe a new technique that builds on these but is both straightforward to implement and computationally inexpensive.
The crucial insight is an unusual state-dependent allocation strategy for the addresses of continuations.
Our technique imposes only a constant-factor overhead on the underlying analysis and
costs only $O(n^3)$ in the monovariant case.
We present the intuitions behind this development, benchmarks demonstrating its efficacy, and a proof of the precision of this analysis.
\end{abstract}

\category{D.3.4}{Programming Languages}{Processors and Optimization}


\keywords
Static analysis;
Control-flow analysis;
Abstract interpretation;
Pushdown analysis;
Store-allocated continuations

\section{Introduction}
\label{sec:intro}
Recent developments in the static analysis of higher-order languages make it possible to obtain perfect precision in modeling the call stack.
This allows calls and returns to be matched up precisely and avoids spurious return flows.
Consider the following Racket code, which binds an identity function and applies it on two distinct values:
\begin{Verbatim}[commandchars=\\\{\}]
    (let* ([id (lambda (x) x)]
           [y (id #t)]
           [z (id #f)])
      ...)
\end{Verbatim}
Without a precise modeling of the call stack, the value \texttt{\#f} can  spuriously flow to the variable \texttt{y}, even when a technique like
call sensitivity initially keeps them separate.

To avoid this imprecision, Vardoulakis and Shivers \cite{vardoulakis:2010:cfa2} introduces a \textit{context-free approach} (as in context-free languages, not context sensitivity) to 
program analysis with CFA2.
This technique provides a computable, although exponential-time, method for obtaining perfect stack precision for monovariant analyses of continuation-passing-style programs.
Two other approaches, PDCFA and AAC, build on this work by enabling polyvariant (e.g., context sensitive) analysis of direct-style programs and do so at only a polynomial-factor 
increase to the run-time complexity of the underlying analysis.

Earl~\etal~\cite{earl:2010:pushdown} presents a \textit{pushdown control-flow analysis} (PDCFA), which improves on traditional control-flow analysis by annotating edges 
in the state graph with stack actions (i.e., push and pop) that implicitly represent precise call stacks.
But, this method obtains its precision at a substantial increase in worst-case complexity.
For example, a monovariant PDCFA is in $O(n^6)$ where its finite-state equivalent is in $O(n^3)$.
Unfortunately, PDCFA also requires significant machinery and presents challenges to engineers responsible for constructing and maintaining such analyses.

Johnson and Van Horn~\cite{johnson:2014:aac} presents \textit{abstracting abstract control} (AAC), a refinement of store-allocated continuations with the established finite-state 
method of merging stack frames into the store, and defines an allocator that is precise enough to avoid all spurious merging.
The key advantage of this method is that it is trivial to implement in existing analysis frameworks that use store-allocated continuations and comes at the cost of changing roughly one line of code.
Unfortunately, AAC is more computationally complex than PDCFA as even in the monovariant case it is in $O(n^8)$.

We draw on the lessons learned from all three approaches and present a technique for obtaining perfect call-stack precision at only a constant-factor increase to run-time complexity over traditional finite-state analysis
(i.e., \textit{for free} in terms of complexity) and requiring no refactoring of analyses already using store-allocated continuations (i.e., \textit{for free} in terms of labor).

\subsection{Contributions}
\label{sec:intro:contributions}
We contribute an efficient method for obtaining a perfectly precise modeling of the call stack in static analyses. Specifically:
\begin{itemize}
\item
We present a novel technique for obtaining perfect call-stack precision at no asymptotic cost to run-time complexity and requiring only a
trivial change to analyses already using store-allocated continuations.
In the monovariant case, our analysis is in $O(n^3)$, the same complexity class as a traditional $0$-CFA. 
\item
We illustrate the intuition behind our approach and explain why previous PTIME methods (PDCFA and AAC) fail to exploit it.
\item
We describe our implementation and provide benchmarks that demonstrate its efficacy.
\item
We define a relationship between our technique and a static analysis that uses unbounded stacks and use it to prove the precision of our method.
\end{itemize}

\subsection{Outline}
\label{sec:intro:outline}
Section~\ref{sec:background} defines a simple direct-style language and its operational semantics. It presents the relevant background on abstract interpretation 
using abstract machines, soundness, store widening, and concepts necessary to understanding our technique.
We close this section by giving a walkthrough of the above example, illustrating precisely how values become merged in a traditional analysis.

In section~\ref{sec:perfect}, we formalize an incomputable static analysis that defines what is meant by perfect stack precision.
This analysis loses no precision in its modeling of the call stack but requires an infinite number of unbounded stacks to be explored.
We then review PDCFA and AAC, the existing polynomial-time approaches to obtaining an equivalent stack precision.

In section~\ref{sec:forfree}, we formalize our technique, give the intuitions that led us to it, and explain how it relates to each of the analyses 
described in section~\ref{sec:perfect}.
We describe our implementation and present both monovariant and call-sensitive allocation benchmark results that compare the complexity and 
precision of our technique to that of ACC.

Section~\ref{sec:proof} provides a formal relationship between the unbounded-stack machine of section~\ref{sec:perfect} and our improved finite-state analysis.
We use this relationship to prove the perfect precision of our method.

\section{Background}
\label{sec:background}
Static analysis by abstract interpretation proves properties of a program by running its code through an interpreter powered by an 
\textit{abstract semantics} that approximates the behavior of a \textit{concrete semantics}.
This process is a general method for analyzing programs and serves applications such as program verification, malware/vulnerability detection, and
compiler optimization, among others 
\cite{cousot:1976:staticdetermination,cousot:1977:unifiedlatticemodel,cousot:1979:systematicdesign,midtgaard:2012:cfa}
The \textit{abstracting abstract machines} (AAM) approach uses abstract interpretation of abstract machines for \textit{control-flow analysis} (CFA) of functional (higher-order) 
programming languages \cite{johnson:2013:oaam,might:2010:aam,might:2010:free}.
The AAM methodology allows a high degree of control over how program states are represented and is easy to instrument.

In this section, we review operational semantics and abstract interpretation using AAM along with other concepts we will require as we progress.
We present a concrete interpretation of a simple direct-style language, a traditional finite-state abstraction, and a store-widened polynomial-time analysis.
We then explore the return-flow merging problem in greater detail.

\subsection{Concrete Semantics}
\label{sec:background:concrete}
We will be using the direct-style (call-by-value, untyped) $\lambda$-calculus in administrative-normal-form (ANF) \cite{flanagan:1993:essence}.
\begin{align*}
 \expr \in \syn{Exp} &\produces \letiform{\vv}{\appform{\fexpr}{\aexpr}}{\expr} && \text{[call]}
 \\
 &\;\;\opor\;\; \aexpr && \text{[return]}
 \\
 \fexpr,\aexpr \in \syn{AExp} &\produces \vv \opor \lam && \text{[atomic expressions]}
 \\
 \lam \in \syn{Lam} &\produces \lamform{\vv}{\expr} && \text{[lambda abstractions]}
 \\
 \vv,y \in \syn{Var} &\text{ is a set of identifiers} && \text{[variables]}
\end{align*}
All intermediate expressions are administratively \texttt{let}-bound, and the order of operations is made explicit as a stack of such \texttt{let}s.
This not only simplifies our semantics, but is convenient for analysis as every intermediate expression can naturally be given a unique identifier.
Additional core forms permitting mutation, recursive binding, conditional branching, tail calls, and primitive operations
add complexity, but do not complicate the technique we aim to discuss and so are left out.

Our concrete interpreter operates over machine states $\varsigma$.
\begin{align*}
 \state \in \State &\defas \syn{Exp} \times \var{Env} \times \var{Store} \times \var{Kont}  && \text{[states]}
 \\
 \env \in \var{Env} &\defas \syn{Var} \parto \var{Addr} && \text{[environments]}
 \\
 \store \in \var{Store} &\defas \var{Addr} \parto \var{Clo} && \text{[stores]}
 \\
 \clo \in \var{Clo} &\defas \syn{Lam} \times \var{Env} && \text{[closures]}
 \\
 \cont \in \var{Kont} &\defas \var{Frame}^{*} && \text{[stacks]}
 \\
 \phi \in \var{Frame} &\defas \syn{Var} \times \syn{Exp} \times \var{Env} && \text{[stack frames]}
 \\
 \addr \in \var{Addr} &\text{ is an infinite set} && \text{[addresses]}
\end{align*}
Binding environments ($\rho$) map variables in scope to a representative address ($a$).
Value stores ($\sigma$) map these addresses to a program value.
(For pure $\lambda$-calculus, all values are closures.)
Both are partial functions that are incrementally extended with new points.
A closure ($\clo$) pairs a syntactic lambda with an environment over which it is closed.
Continuations ($\cont$) are unbounded sequences of stack frames.
Each stack frame ($\phi$) contains a variable to bind,
an expression control returns to, and an environment to reinstate.
%
%
Addresses ($a$) may be drawn from any set which permits us to generate an arbitrary number of fresh values (e.g., $\mathbb{N}$).

We define a helper $\mathcal{A} : \syn{AExp} \times \var{Env} \times \var{Store} \parto \var{Clo}$ for atomic-expression evaluation: 
\begin{align*}
  \mathcal{A}(\vv, \rho, \sigma) &\defas \sigma(\rho(\vv))  && \text{[variable lookup]}
  \\
  \mathcal{A}(\lam, \rho, \sigma) &\defas (\lam, \rho)  && \text{[closure creation]}
\end{align*}

A concrete transition relation $(\cTo) : \Sigma \parto \Sigma$ defines the operation of this machine by determining at most one successor for any given
predecessor state.
The machine stops when the end of a program's execution is reached or when given an invalid state.
Call sites transition according to the following transition rule: 
\begin{align*}
 (\letiform{y}{\appform{\fexpr}{\aexpr}}{\expr}, \rho, \sigma, \cont)
 &\cTo
 (\expr', \rho', \sigma', \phi\!:\!\cont)
 \text{, where }
\end{align*}
\vspace{-0.55cm}
\begin{align*}
  \phi &= (y, \expr, \rho)
  \\
  (\lamform{x}{\expr'},\rho_\lambda) &= \mathcal{A}(\fexpr, \rho, \sigma)
  \\
  \rho' &= \rho_\lambda[x \mapsto \addr]
  \\
  \sigma' &= \sigma [\addr \mapsto \mathcal{A}(\aexpr,\rho,\sigma)]
  \\
  a & \text{ is a \textit{fresh} address}
\end{align*}
A new frame $\phi$ is pushed onto the stack for eventually returning to the body of this \texttt{let}-form.
The atomic expression $f$ is either a lambda-form or a variable-reference and is evaluated to a closure by our helper $\mathcal{A}$.
In our notation, ticks are used to uniquely name identifiers that may be different.
These do not have any bearing on the variable's domain, but where possible will hint at usage (e.g., a single tick for a successor's components).
A subscript may be more significant, but we will be careful to point it out.
This is not the case for $\rho_\lambda$, which is used to name whatever environment was drawn from the closure for $f$.
This is simply an environment distinct from $\rho$ and $\rho'$.
We generate a \textit{fresh} address $\addr$ (any address such that $\addr \notin \var{dom}(\sigma)$) and update $\rho_\lambda$ with a mapping $x \mapsto a$ to
produce the successor environment $\rho'$.
Likewise, the prior store $\sigma$ is extended at this address with the value for $\aexpr$ to produce $\sigma'$.

Return points transition according to a second rule:
\begin{align*}
 (\aexpr, \rho, \sigma, \phi\!:\!\cont)
 &\cTo
 (\expr, \rho', \sigma', \cont)\text{, where }
\end{align*}
\vspace{-0.55cm}
\begin{align*}
 \phi &= (\vv, \expr, \rho_\cont)
 \\
 \rho' &= \rho_\cont[\vv \mapsto \addr]
 \\
 \sigma' &= \sigma [ \addr \mapsto \mathcal{A}(\aexpr, \rho, \sigma) ]
 \\
 \addr & \text{ is a \textit{fresh} address}
\end{align*}
The top stack frame $\phi$ is decomposed and its environment $\rho_\cont$ extended with a fresh address $a$ to produce $\rho'$.
Likewise, the store is extended at this address with the value for $\aexpr$ to produce $\sigma'$.
The expression $\expr$ in the top stack frame is reinstated at $\rho'$, and $\sigma'$ is put atop the predecessor's stack tail $\cont$.

To fully evaluate a program $e_0$ using these transition rules, we \textit{inject} it into our state-space using a helper 
$\mathcal{I} : \syn{Exp} \to \Sigma$:
\begin{align*}
  \mathcal{I}(e) \defas (e, \varnothing, \varnothing, \epsilon)
\end{align*}
We perform the standard lifting of $(\cTo)$ to obtain a collecting semantics defined over sets of states:
\begin{align*}
  s \in S &\defas \PowSm{\Sigma}
\end{align*}
Our collecting relation $(\cToW)$ is a monotonic, total function that gives a set including the trivially reachable state $\mathcal{I}(e_0)$
plus the set of all states immediately succeeding those in its input.
\begin{align*}
  s \cToW s' &\defas s' = \set{ \varsigma'\ \vert\ \varsigma \in s \wedge \varsigma \cTo \varsigma' } \cup \set{ \mathcal{I}(e_0) }
\end{align*}

If the program $e_0$ terminates, iteration of $(\cToW)$ from $\bot$ (i.e., the empty set $\varnothing$) does as well.
That is, ${(\cToW)}^{n}(\bot)$ is a fixed point containing $e_0$'s full program trace for some $n \in \mathbb{N}$ whenever $e_0$ is a terminating program.
No such $n$ is guaranteed to exist in the general case (when $e_0$ is a non-terminating program)
as our language (the untyped $\lambda$-calculus) is Turing-complete, our semantics is fully precise, and the state-space we defined is infinite.

\subsection{Abstract Semantics}
\label{sec:background:abstract}
We are now ready to design a computable approximation of the exact program trace using an abstract semantics.
Previous work has explored a wide variety of approaches to systematically abstracting a semantics like these 
\cite{might:2010:aam,johnson:2013:oaam,might:2010:free}.
%
Broadly construed, the nature of these changes is to simultaneously finitize the domains of our machine while introducing non-determinism both into the 
transition relation (multiple successor states may immediately follow a predecessor state) and the store (multiple values may be indicated by a single address).
We use a finite state space to ensure computability.
However, to justify that a semantics defined over this finite machine is soundly approximating
our concrete semantics (for a defined notion of abstraction), we must also modify our finite states so that a potentially infinite number of 
concrete states may abstract to a single finite state.
We will use this term \textit{finite state} to differentiate from other kinds of machine states.
Components unique to this finite-state machine wear tildes:
\begin{align*}
 \tilde{\varsigma} \in \tilde{\Sigma} &\defas \syn{Exp} \times \widetilde{\var{Env}} \times \widetilde{\var{Store}} && \text{[states]}
 \\
 &\ \ \ \ \ \ \ \times \widetilde{\var{KStore}} \times \widetilde{\var{Addr}}
 \\
 \tilde{\rho} \in \widetilde{\var{Env}} &\defas \syn{Var} \parto \widetilde{\var{Addr}} && \text{[environments]}
 \\
 \tilde{\sigma} \in \widetilde{\var{Store}} &\defas \widetilde{\var{Addr}} \to \tilde{D} && \text{[stores]}
 \\
 \tilde{d} \in \tilde{D} &\defas \PowSm{\widetilde{\var{Clo}}} && \text{[flow-sets]}
 \\
 \widetilde{\clo} \in \widetilde{\var{Clo}} &\defas \syn{Lam} \times \widetilde{\var{Env}} && \text{[closures]}
 \\
 \tilde{\sigma}_{\kappa} \in \widetilde{\var{KStore}} &\defas \widetilde{\var{Addr}} \to \tilde{K} && \text{[continuation stores]}
 \\
 \tilde{k} \in \tilde{K} &\defas \PowSm{\widetilde{\var{Kont}}}  && \text{[kont-sets]}
 \\
 \tilde{\cont} \in \widetilde{\var{Kont}} &\defas \widetilde{\var{Frame}} \times \widetilde{\var{Addr}} && \text{[continuations]}
 \\
 \tilde{\phi} \in \widetilde{\var{Frame}} &\defas \syn{Var} \times \syn{Exp} \times \widetilde{\var{Env}} && \text{[stack frame]}
 \\
 \tilde{a},\tilde{a}_\cont \in \widetilde{\var{Addr}} &\text{ is a finite set} && \text{[addresses]}
\end{align*}
There were two fundamental sources of unboundedness in the concrete machine: the value store (with an infinite domain of addresses), and 
the current continuation (modeled as an unbounded list of stack frames).
We bound the value store ($\tilde{\sigma}$) by restricting its domain to a finite set of addresses ($\tilde{a}$), but we permit a 
\textit{set} of abstract closures ($\widetilde{\clo}$) at each.
We finitize the stack similarly by threading it through the store as a linked list.
A continuation is thus represented by an address.
This address points to a \textit{set} of topmost frames, each paired with the address of its continuation in turn (i.e., that stack's tail).
We separate the continuation store ($\tilde{\sigma}_\cont$) from the value store ($\tilde{\sigma}$) to maintain simplicity as we progress.

Abstract environments ($\tilde{\rho}$) change only because our address set is now finite. 
Abstract closures ($\widetilde{\clo}$) are approximate only by virtue of their environments using these abstract addresses.
For each such $\tilde{a}$, the finite value store ($\tilde{\sigma}$) denotes a \textit{flow set} ($\tilde{d}$) of closures.
At each point, a continuation store ($\tilde{\sigma}_\cont$) has a set of continuations ($\tilde{k}$).
Like closures, each abstract frame ($\tilde{\phi}$) is approximate only by virtue of its abstracted environment.
An abstract continuation ($\tilde{\cont}$) pairs a frame with an address ($\tilde{a}_\cont$) for the stack underneath.

As before, we define a helper for abstract atomic evaluation, $\tilde{\mathcal{A}}$:
\begin{align*}
  \tilde{\mathcal{A}} : \syn{AExp} \times \widetilde{\var{Env}} \times \widetilde{\var{Store}} \parto \tilde{D}
\end{align*}
\vspace{-0.5cm}
\begin{align*}
  \tilde{\mathcal{A}}(\vv, \tilde{\rho}, \tilde{\sigma}) &\defas \tilde{\sigma}(\tilde{\rho}(\vv))  && \text{[variable lookup]}
  \\
  \tilde{\mathcal{A}}(\lam, \tilde{\rho}, \tilde{\sigma}) &\defas \set{ (\lam,\tilde{\rho}) }  && \text{[closure creation]}
\end{align*}
Note that atomic evaluation of a lambda expression new yields a set containing a single element for the closure of that lambda.

Because our address domain is now finite, multiple concrete allocations need to be represented by a single abstract address.
There are a variety of sound strategies for doing this.
Each strategy corresponds to a distinct style of analysis and is amenable to easy implementation
by defining an auxiliary $\widetilde{\var{alloc}}$ helper to encapsulate these differences in behavior.
Given the variable for which to allocate and the finite state performing the allocation, the abstract allocator returns an address:
\begin{align*}
  \widetilde{\var{alloc}} : \syn{Var} \times \tilde{\Sigma} \to \widetilde{\var{Addr}}
\end{align*}
One such behavior is to simply return the variable itself (as a $0$-CFA would):
\begin{align*}
  \widetilde{\var{alloc}}_{0}(\vv,\tilde{\varsigma}) &\defas \vv
\end{align*}
Using $\widetilde{\var{alloc}}_{0}$ would tune our finite-state semantics to the \textit{monovariant} analysis style (also called zeroth-order CFA), 
a form of context-insensitive analysis.
In a monovariant analysis, every closure that is bound to a variable $x$ at any point during a concrete execution ends up being
represented in a single flow set when the analysis is complete.

Because we are also now store-allocating continuations and distinguishing a top-level continuation store, we likewise distinguish 
an abstract allocator specifically for addresses in this store:
\begin{align*}
  \widetilde{\var{alloc}}_\cont : \tilde{\Sigma} \times \syn{Exp} \times \widetilde{\var{Env}} \times \widetilde{\var{Store}} \to \widetilde{\var{Addr}}
\end{align*}
A standard choice is to allocate based on the target expression:
\begin{align*}
  \widetilde{\var{alloc}}_{\cont\ \!0}((e, \tilde{\rho}, \tilde{\sigma}, \tilde{\sigma}_\cont, \tilde{a}_\cont), e', \tilde{\rho}', \tilde{\sigma}') &\defas e'
\end{align*}
We provide to this function all the information known about the transition being made.
The value-store allocator is invoked before a successor $\tilde{\rho}'$ or $\tilde{\sigma}'$ is constructed.
However, when calling the continuation-store allocator, we provide information about the target state being transitioned to.
The choice of $e'$ for allocating a continuation address makes sense considering the entry point of a function should
know where it is returning.
In fact, when performing an analysis of a continuation-passing-style (CPS) language, $e'$ also would naturally be the choice inherited from a
monovariant value-store allocator (assuming an alpha-renaming such that every $x$ is unique to a single binding point).

We may now define a non-deterministic finite-state transition relation $(\fsTo) \subseteq \tilde{\Sigma} \times \tilde{\Sigma}$.
Call sites transition as follows.
\begin{align*}
 \overbrace{
 (\letiform{y}{\appform{\fexpr}{\aexpr}}{\expr},
  \tilde{\rho},
  \tilde{\sigma},
  \tilde{\sigma}_\cont,
  \tilde{a}_\cont)}^{\tilde{\varsigma}}
 &\fsTo
 (\expr',
  \tilde{\rho}',
  \tilde{\sigma}',
  \tilde{\sigma}_\cont',
  \tilde{a}_\cont')\text{, where }
\end{align*}
\vspace{-0.5cm}
\begin{align*}
  (\lamform{x}{\expr'},\tilde{\rho}_\lambda) &\in
   \tilde{\mathcal{A}}(\fexpr, \tilde{\rho}, \tilde{\sigma})
  \\
  \tilde{\rho}' &= \tilde{\rho}_\lambda[x \mapsto \tilde{a}]
  \\
  \tilde{\sigma}' &= \tilde{\sigma} \join [\tilde{a} \mapsto \tilde{\mathcal{A}}(\aexpr,\tilde{\rho},\tilde{\sigma})]
  \\
  \tilde{a} &= \widetilde{\var{alloc}}(x, \tilde{\varsigma})
  \\
  \tilde{\sigma}_\cont' &= \tilde{\sigma}_\cont \join [\tilde{a}'_\cont
    \mapsto \set{ ((y, \expr, \tilde{\rho}),\tilde{a}_\acont) }]
  \\
  \tilde{a}'_\cont &= \widetilde{\var{alloc}}_\cont(\tilde{\varsigma}, \expr', \tilde{\rho}', \tilde{\sigma}')
\end{align*} 
As $\tilde{\mathcal{A}}$ yields a set of abstract closures for $f$, a successor state is produced for each.
Likewise, so each point in the store accumulates all closures bound at that abstract address $\tilde{a}$ and so we faithfully over-approximate
all the addresses $a$ that $\tilde{a}$ simulates, we use a join operation when extending the store.
The join of two stores distributes point-wise as follows.
\begin{align*}
  \tilde{\sigma} \sqcup \tilde{\sigma}' &\defas \lambda \tilde{a}.\ \tilde{\sigma}(\tilde{a}) \cup \tilde{\sigma}'(\tilde{a})
  \\
  \tilde{\sigma}_\cont \sqcup \tilde{\sigma}_\cont' &\defas 
      \lambda \tilde{a}_\cont.\ \tilde{\sigma}_\cont(\tilde{a}_\cont) \cup \tilde{\sigma}_\cont'(\tilde{a}_\cont)
\end{align*}
Instead of generating a fresh address for $\tilde{a}$, we use our abstract allocation policy to select one.
To instantiate a monovariant analysis like $0$-CFA, this address is simply the syntactic variable $x$.
Likewise, we generate an address for our continuation (a new stack frame atop the current continuation) and extend the continuation store.

The return transition is modified in the same way: 
\begin{align*}
 \overbrace{(\aexpr,
  \tilde{\rho},
  \tilde{\sigma},
  \tilde{\sigma}_\cont,
  \tilde{a}_\cont)}^{\tilde{\varsigma}}
 &\fsTo
 (\expr,
  \tilde{\rho}',
  \tilde{\sigma}',
  \tilde{\sigma}_\cont,
  \tilde{a}_\cont')\text{, where }
\end{align*}
\vspace{-0.5cm}
\begin{align*}
 ((\vv,\expr,\tilde{\rho}_\cont),\tilde{a}_\cont') &\in \tilde{\sigma}_\cont(\tilde{a}_\cont)
 \\
 \tilde{\rho}' &= \tilde{\rho}_\cont[\vv \mapsto \tilde{a}]
 \\
 \tilde{\sigma}' &= \tilde{\sigma} \join [ \tilde{a} \mapsto \tilde{\mathcal{A}}(\aexpr,\tilde{\rho},\tilde{\sigma}) ]
 \\
 \tilde{a} &= \widetilde{\var{alloc}}(\vv,\tilde{\varsigma})
\end{align*}
Where multiple topmost stack frames are pointed to by $\tilde{a}_\cont$, this transition yields multiple successors.
An updated environment and store are produced as before, but the continuation store remains as it was.
The current continuation $\tilde{a}_\cont'$ reinstated in each successor is the address associated with each topmost stack frame.

To approximately evaluate a program according to these abstract semantics, we first define an abstract injection function,
$\tilde{\mathcal{I}}$, where the stores begin as functions, $\bot$, that map every abstract address to the empty set.
\begin{align*}
 \tilde{\mathcal{I}} : \syn{Exp} \to \tilde{\Sigma}
\end{align*}
\begin{align*}
 \tilde{\mathcal{I}}(\expr) \defas (\expr, \varnothing, \bot, \bot, \tilde{a}_{\text{halt}})
\end{align*}
The address $\tilde{a}_{\text{halt}}$ can be any otherwise unused address that is never returned
by the allocation function.
Our machine will eventually be unable to transition into this continuation and will then produce no successors, which simulates the
behavior of our concrete machine upon reaching an empty stack $(\epsilon)$.

We again lift $(\fsTo)$ to obtain a collecting semantics $(\fsToS)$ defined over sets of states:
\begin{align*}
  \tilde{s} \in \tilde{S} &\defas \PowSm{\tilde{\Sigma}}
\end{align*}
\vspace{-0.5cm}
\begin{align*}
  \tilde{s} \fsToS \tilde{s}'\ &\defas\ \tilde{s}' = \set{ \tilde{\varsigma}'\ \vert\ \tilde{\varsigma} \in \tilde{s} \wedge \tilde{\varsigma} \fsTo \tilde{\varsigma}' } \cup \set{ \tilde{\mathcal{I}}(e_0) }
\end{align*}
Our collecting relation $(\fsToS)$ is a monotonic, total function that gives a set including the trivially reachable finite-state $\tilde{\mathcal{I}}(e_0)$
plus the set of all states immediately succeeding those in its input.

Because $\tilde{\Sigma}$ is now finite, we know the approximate evaluation of even a non-terminating $e_0$ will terminate.
That is, for some $n \in \mathbb{N}$, the value $(\fsToS)^{n}(\bot)$ is guaranteed to be a fixed point containing an approximation of $e_0$'s full program trace
\cite{tarski:1955:fixpoint}.

\subsection{Soundness}
\label{sec:background:soundness}
An analysis is \textit{sound} if the information it provides about a program represents an accurate bound on the behavior of all possible concrete executions.
The kind of control-flow information the finite-state analysis in section~\ref{sec:background:abstract} obtains is a conservative over-approximation of 
program behavior.  It places an upper bound on the propagation of closures though a program.

To establish such a relationship between a concrete and abstract semantics, we use Galois connections.
A Galois connection is a pair of functions for abstraction and concretization such that the following holds.
\[
\begin{gathered}
\begin{aligned}
  \alpha : S \to \tilde{S}
  & & & & &
  \gamma : \tilde{S} \to S 
\end{aligned}
\\
\\
\begin{aligned}
  \alpha(s) \subseteq \tilde{s} \iff s \subseteq \gamma(\tilde{s})
\end{aligned}
\end{gathered}
\]
Using this defined notion of simulation, we may show that our abstract semantics approximates the concrete semantics by proving that 
simulation is preserved across transition:
\begin{align*}
  \alpha(s) \subseteq \tilde{s} \wedge s \cToW s' \implies \tilde{s} \fsToS \tilde{s}' \wedge \alpha(s') \subseteq \tilde{s}'
\end{align*}
Diagrammatically this is:
\begin{equation*}
  \begin{CD}
     s @> \cToW >> s'  \\
     @V \subseteq V \alpha V    @V \subseteq V \alpha V \\
    \tilde{s}  @> \fsToS >> \tilde{s}'
  \end{CD}
\end{equation*}
Both constructing analyses using Galois connections and proving them sound using Galois connections has been extensively explored in the literature 
\cite{might:2010:free,might:2010:aam}.
The analysis style we constructed in section~\ref{sec:background:abstract} has been previously proven sound using the above method \cite{might:2007:diss}.

\subsection{Store Widening}
\label{sec:background:widening}
Various forms of widening and further approximations may be layered on top of this na\"ive analysis.
One such approximation is store widening, which is necessary for our analysis to be tractable (i.e., polynomial time).
To see why store widening is necessary, let us consider the complexity of an analysis using $(\fsToS)$.
The height of the power-set lattice $(\tilde{S}, \cup, \cap)$ is the number of elements in $\tilde{\Sigma}$ which is the product of
expressions, environments, stores, and addresses.
A standard worklist algorithm at most does work proportional to the number of states it can discover \cite{might:2010:mcfa}.
For the imprecise allocators we have defined, analysis run-time is thus in:
\begin{align*}
  O( \overbrace{n}^{\vert\syn{Exp}\vert} \times \overbrace{n}^{\vert\widetilde{\var{Env}}\vert} \times \overbrace{2^{n^2}}^{\vert\widetilde{\var{Store}}\vert}
      \times \overbrace{2^{n^2}}^{\vert\widetilde{\var{KStore}}\vert} \times \overbrace{n}^{\vert\widetilde{\var{Addr}}\vert} \! )
\end{align*}
The number of syntactic points in an input program is in $O(n)$.
In the monovariant case, environments map variables to themselves and are isomorphic to the sets of free variables that may be determined for each syntactic point.
The number of addresses produced by our monovariant allocators is in $O(n)$ as these are either syntactic variables or expressions.
The number of value stores may be visualized as a table of possible mappings from every address to every abstract closure---each may be 
included in a given store or not as seen in figure~\ref{figure:value-space-of-stores}.
\def\matriximgbigger{%
  \kbordermatrix{& \widetilde{clo}_0 & \widetilde{clo}_1 & \widetilde{clo}_2 & \cdots & \widetilde{clo}_i & \cdots \cr
	               \tilde{a}_0 & 0 & 1 & 0 & \cdots & 0 & \cdots \cr
                       \tilde{a}_1 & 1 & 0 & 0 & \cdots & 1 & \cdots \cr
                       \tilde{a}_2 & 0 & 0 & 1 & \cdots & 0 & \cdots \cr
		       \vdots & \vdots & \vdots & \vdots & \ddots & \vdots & \vdots \cr
                       \tilde{a}_j & 1 & 0 & 0 & \cdots & 1 & \cdots \cr
                       \vdots & \vdots & \vdots & \vdots & \vdots & \vdots & \ddots \cr }
}
\def\matriximg{%
  \kbordermatrix{& \widetilde{clo}_0 & \widetilde{clo}_1 & \cdots & \widetilde{clo}_i & \cdots \cr
	               \tilde{a}_0 & 0 & 0 & \cdots & 1 & \cdots \cr
                       \tilde{a}_1 & 1 & 1 & \cdots & 0 & \cdots \cr
		       \vdots & \vdots & \vdots & \ddots & \vdots & \vdots \cr
                       \tilde{a}_j & 0 & 0 & \cdots & 1 & \cdots \cr
                       \vdots & \vdots & \vdots & \vdots & \vdots & \ddots \cr }
}
\begin{figure}
\[
  \text{\scriptsize $O(n)$}\left\{\left.\vphantom{\raisebox{-0.9\baselineskip}{\scalebox{0.8}[0.8]{\matriximg}}}\right.\right.\kern-2\nulldelimiterspace
\hspace{0.71cm}
\overbrace{\phantom{\scalebox{0.61}[1.0]{\matriximg}}}^{O(n)}\kern-\nulldelimiterspace\vphantom{\matriximg}
\hspace{-4.7cm}\matriximg
\]
\caption{The value space of stores.}
\label{figure:value-space-of-stores}
\end{figure}
The number of abstract closures is in $O(n)$ because lambdas uniquely determine a monovariant environment.
This times the number of addresses gives $O(n^2)$ possible additions to the value store.
The number of continuations is likewise in $O(n)$ because $\texttt{let}$-forms uniquely determine their binding variable, 
body, and monovariant environment.
This times the number of possible addresses gives $O(n^2)$ possible additions to the continuation store.

The crux of the issue is that, in exploring a na\"ive state-space (where each state is specific to a whole store), we may explore 
both sides of every diamond in the store lattices.
All combinations of possible bindings in a store may need to be explored, including every alternate path up the store lattice.
For example, along one explored path we might extend an address $\tilde{a}_1$ with $\widetilde{\clo}_1$ before extending it with $\widetilde{\clo}_2$, 
and along another path we might add these closures in the reverse order (i.e., $\widetilde{\clo}_2$ before $\widetilde{\clo}_1$). 
We might also extend another address $\tilde{a}_2$ with $\widetilde{\clo}_1$ either before or after either of these cases, and so forth.
This potential for exponential blow-up is unavoidable without further widening or coarser structural abstraction.

Global-store widening is an essential technique for combating exponential blow up.
This lifts the store alongside a set of reachable states instead of nesting them inside states in $\tilde{\Sigma}$.
To formalize this, we define new \textit{widened} state spaces that pair a set of reachable \textit{configurations} (states \textit{sans} stores) 
with a global value store and global continuation store.
Instead of accumulating whole stores, and thereby all possible sequences of additions within such stores,
the analysis strictly accumulates new values in the store in the same way $(\fsToS)$ accumulates reachable states in an $\tilde{s}$:
\begin{align*}
 \tilde{\xi} \in \tilde{\Xi} &\defas \tilde{R} \times \widetilde{\var{Store}} \times \widetilde{\var{KStore}} && \text{[state-spaces]}
 \\
 \tilde{r} \in \tilde{R} &\defas \PowSm{ \tilde{C} }  && \text{[reachable configurations]}
 \\
 \tilde{c} \in \tilde{C} &\defas \syn{Exp} \times \widetilde{\var{Env}} \times \widetilde{\var{Addr}}  && \text{[configurations]}
\end{align*}
A widened transfer function $(\fsToW)$ may then be defined that, like $(\fsToS)$, is a monotonic, total function we may iterate to a fixed point.
\begin{align*}
  (\fsToW) : \tilde{\Xi} \to \tilde{\Xi}
\end{align*}
This may be defined in terms of $(\fsTo)$, as was $(\fsToS)$, by transitioning each reachable configuration using the global store to yield a new set of 
reachable configurations and a set of stores whose least upper bound is the new global store: 
\begin{align*}
 (\tilde{r},
  \tilde{\sigma},
  \tilde{\sigma}_\cont)
 &\fsToW
 (\tilde{r}',
  \tilde{\sigma}',
  \tilde{\sigma}_\cont')\text{, where }
\end{align*}
\vspace{-0.5cm}
\begin{align*}
  \tilde{s} &= \set{ \tilde{\varsigma}\ \vert\ (e, \tilde{\rho}, \tilde{a}_\cont) \in \tilde{r} \wedge (e, \tilde{\rho}, \tilde{\sigma}, \tilde{\sigma}_\cont, \tilde{a}_\cont) \fsTo \tilde{\varsigma} } \cup \set{ \tilde{\mathcal{I}}(e_0) }
  \\
  \tilde{r}' &= \set{ (e, \tilde{\rho}, \tilde{a}_\cont)\ \vert\ (e, \tilde{\rho}, \tilde{\sigma}, \tilde{\sigma}_\cont, \tilde{a}_\cont) \in \tilde{s} }
  \\
  \tilde{\sigma}' &= \!\!\!\!\!\!\!\!\bigsqcup_{(\_, \_, \tilde{\sigma}'', \_, \_) \in \tilde{s}}\!\!\!\!\!\!\!\tilde{\sigma}''
  \ \ \ \ \ \ \ \ \ \ \ \ \ \ \ \ 
  \tilde{\sigma}_\cont' = \!\!\!\!\!\!\!\!\!\bigsqcup_{(\_, \_, \_, \tilde{\sigma}_\cont'', \_) \in \tilde{s}}\!\!\!\!\!\!\!\!\tilde{\sigma}_\cont''
\end{align*}
In this definition, an underscore (wildcard) matches anything.
The height of the $\tilde{R}$ lattice is linear (as environments are monovariant) and the height of the store lattices are quadratic (as each global store is strictly extended).
Each extension of the store may require $O(n)$ transitions because at any given store, we must transition every configuration to be sure to obtain any changes to the store or
otherwise reach a fixed point.
A traditional worklist algorithm for computing a fixed point is thus cubic: 
\begin{align*}
  O( \overbrace{n}^{\vert\tilde{C}\vert}\ \times\ (\!\!\overbrace{n^2}^{\vert\widetilde{\var{Store}}\vert} + \overbrace{n^2}^{\vert\widetilde{\var{KStore}}\vert} \!\!\!\! ))
\end{align*}

\subsection{Stack Imprecision}
\label{sec:background:imprecision}
To illustrate the effect of an imprecise stack on data-flow and control-flow precision, we first define a more precise $1$-call-sensitive 
(first-order, $1$-CFA) allocator.
A $k$-call-sensitive analysis style differentiates bindings to a variable so they are unique to a history of the last 
$k$ call sites reached before the binding.
A history of length $k=1$ then allocates an address unique to the call site immediately preceeding the binding by using the following allocator.
\begin{align*}
  \widetilde{alloc}_1(x, (e, \tilde{\rho}, \tilde{\sigma}, \tilde{\sigma}_\cont, \tilde{\cont})) &\defas (x, e)
\end{align*}
Now, using $\widetilde{\var{alloc}}_1$, consider the following snippet of code where the variable \texttt{id} is already bound to 
\texttt{($\lambda$ (x) \!\textsuperscript{0}x)}:
\begin{Verbatim}[commandchars=\\\{\}]
... \textsuperscript{1}(let ([y (id #t)])
      \textsuperscript{2}(let ([z (id #f)])
        \textsuperscript{3}...))
\end{Verbatim}
We number these expressions for ease of reference.
For example, $e_2$ refers to the \texttt{let}-form that binds \texttt{z}, and $e_0$ to the return point of \texttt{id}. 
We assume the starting configuration for this example is $(e_1, \tilde{\rho}, \tilde{a}_\cont)$ where
$\tilde{\rho}$ and $\tilde{\addr}_\cont$ are the binding environment and continuation address at the start of this code.
We likewise let $\tilde{\rho}_\lambda$ be the environment of \texttt{id}'s closure.

The first call to \texttt{id} transitions to evaluate $e_0$ with the continuation address $e_0$. 
This transition reaches the configuration $(e_0, \tilde{\rho}_\lambda[\texttt{x} \mapsto (\texttt{x}, e_1)], e_0)$ and binds $(\texttt{x}, e_1)$ 
to \texttt{\#t} and the continuation address $e_0$ to the continuation $((\texttt{y},e_{2},\tilde{\rho}),\tilde{\addr}_{\kappa})$, 
which gives us the following stores:
\begin{alignat*}{5}
 & \tilde{\sigma} &  & =\{ &  & \!\left(\texttt{x},e_{1}\right) &  & \mapsto &  & \begin{alignedat}[t]{1}\{ & \texttt{\#t}\}\end{alignedat}
\}\\
 & \tilde{\sigma}_{\kappa} &  & =\{ &  & e_{0} &  & \mapsto &  & \begin{alignedat}[t]{1}\{ & \!\left((\texttt{y},e_{2},\tilde{\rho}),\tilde{\addr}_{\kappa}\right)\}\}\end{alignedat}
\end{alignat*}
Next, \texttt{id} returns and transitions from $e_0$ to $e_2$, extending the continuation's environment to $\tilde{\rho}[\texttt{y}\mapsto(\texttt{y}, e_0)]$
and reinstating the continuation address $\tilde{a}_\cont$.
This yields a configuration $(e_2, \tilde{\rho}[\texttt{y}\mapsto(\texttt{y}, e_0)], \tilde{a}_\cont)$.
This transition binds $(\texttt{y}, e_0)$ to \texttt{\#t}, giving us the following stores:
\begin{alignat*}{5}
 & \tilde{\sigma} &  & =\{ &  & \!\left(\texttt{x},e_{1}\right) &  & \mapsto &  & \begin{alignedat}[t]{1}\{ & \texttt{\#t}\}\end{alignedat}
,\\
 &  &  &  &  & \!\left(\texttt{y},e_{0}\right) &  & \mapsto &  & \begin{alignedat}[t]{1}\{ & \texttt{\#t}\}\end{alignedat}
\}\\
 & \tilde{\sigma}_{\kappa} &  & =\{ &  & e_{0} &  & \mapsto &  & \begin{alignedat}[t]{1}\{ & \!\left((\texttt{y},e_{2},\tilde{\rho}),\tilde{\addr}_{\kappa}\right)\}\}\end{alignedat}
\end{alignat*}
Then the second call to \texttt{id} transitions to evaluate $e_0$ with the continuation address $e_0$ once again 
(recall the definition of $\widetilde{\var{alloc}}_{\cont\ \!0}$).
This transition reaches the configuration $(e_0, \tilde{\rho}_\lambda[\texttt{x} \mapsto (\texttt{x}, e_2)], e_0)$, binding $(\texttt{x}, e_2)$ to \texttt{\#f} and the continuation address $e_0$ to the continuation $((\texttt{z},e_{3},\tilde{\rho}[\texttt{y}\mapsto(\texttt{y},e_0)]),\tilde{\addr}_{\kappa})$, giving us the following stores:
\begin{alignat*}{5}
 & \tilde{\sigma} &  & =\{ &  & \!\left(\texttt{x},e_{1}\right) &  & \mapsto &  & \begin{alignedat}[t]{1}\{ & \texttt{\#t}\}\end{alignedat}
,\\
 &  &  &  &  & \!\left(\texttt{y},e_{0}\right) &  & \mapsto &  & \begin{alignedat}[t]{1}\{ & \texttt{\#t}\}\end{alignedat}
,\\
 &  &  &  &  & \!\left(\texttt{x},e_{2}\right) &  & \mapsto &  & \begin{alignedat}[t]{1}\{ & \texttt{\#f}\}\end{alignedat}
\}\\
 & \tilde{\sigma}_{\kappa} &  & =\{ &  & e_{0} &  & \mapsto &  & \begin{alignedat}[t]{1}\{ & \!\left((\texttt{y},e_{2},\tilde{\rho}),\tilde{\addr}_{\kappa}\right),\\
 & \!\left((\texttt{z},e_{3},\tilde{\rho}\left[\texttt{y}\mapsto(\texttt{y},e_0)\right]),\tilde{\addr}_{\kappa}\right)\}\}
\end{alignedat}
\end{alignat*}
Next, \texttt{id} returns and transitions from $e_0$ to $e_3$, reinstating the continuation address $\tilde{a}_\cont$ and extending the continuation's
environment to $\tilde{\rho}[\texttt{y}\mapsto(\texttt{y},e_0)][\texttt{z}\mapsto(\texttt{z},e_0)]$.
Because $e_0$ is bound to two continuations, this transition binds $(\texttt{z}, e_0)$ to \texttt{\#f} while another spuriously binds
$(\texttt{y}, e_0)$ to \texttt{\#f}, causing return-flow imprecision in the following stores:
\begin{alignat*}{5}
 \\
 & \tilde{\sigma} &  & =\{ &  & \!\left(\texttt{x},e_{1}\right) &  & \mapsto &  & \begin{alignedat}[t]{1}\{ & \texttt{\#t}\}\end{alignedat}
,\\
 &  &  &  &  & \!\left(\texttt{x},e_{2}\right) &  & \mapsto &  & \begin{alignedat}[t]{1}\{ & \texttt{\#f}\}\end{alignedat}
,\\
 &  &  &  &  & \!\left(\texttt{y},e_{0}\right) &  & \mapsto &  & \begin{alignedat}[t]{1}\{ & \texttt{\#t},\texttt{\#f}\}\end{alignedat}
,\\
 &  &  &  &  & \!\left(\texttt{z},e_{0}\right) &  & \mapsto &  & \begin{alignedat}[t]{1}\{ & \texttt{\#f}\}\end{alignedat}
\}\\
 \\
 & \tilde{\sigma}_{\kappa} &  & =\{ &  & e_{0} &  & \mapsto &  & \begin{alignedat}[t]{1}\{ & \!\left((\texttt{y},e_{2},\tilde{\rho}),\tilde{\addr}_{\kappa}\right),\\
 & \!\left((\texttt{z},e_{3},\tilde{\rho}\left[\texttt{y}\mapsto(\texttt{y},e_0)\right]),\tilde{\addr}_{\kappa}\right)\}\}
\end{alignedat}
 \\
\end{alignat*}
The address $(\texttt{y}, e_0)$, representing \texttt{y} within $e_3$, maps to both \texttt{\#t} and \texttt{\#f}, even though no 
concrete execution binds \texttt{y} to \texttt{\#f}.
A similar pair of transitions from $(e_0, \tilde{\rho}_\lambda[\texttt{x} \mapsto (\texttt{x}, e_1)], e_0)$ (the second of which is prompted by a change in 
the global continuation store at the address $e_0$) cause the same conflation for \texttt{z}.

Clearly, one solution is to increase the context sensitivity of our continuation allocator.
Consider a continuation allocator $\widetilde{\var{alloc}}_{\cont\ \!1}$ that like $\widetilde{\var{alloc}}_1$ uses a single call site of context 
and allocates a continuation address $(e', e)$ formed from both the expression being transitioned to, $e'$, and 
the expresson being transitioned from, $e$. 
This results in no spurious merging at return points because continuations are kept as distinct as the 
$1$-call-sensitive value-store addresses we allocate.

It seems reasonable from here to suspect that perfect stack precision could always be obtained through a sufficiently precise 
strategy for polyvariant continuation allocation.
The difficulty is in knowing how to obtain this in the general case given an arbitrary value-store allocation strategy. 
Given that CFA2 and PDCFA promise a fixed method for implementing perfect stack precision, albeit at significant engineering and run-time costs, 
can perfect stack precision be implemented as a \textit{fixed}, adaptive continuation allocator?
In this paper, we both answer this question in the affirmative and show that this leads us not only to a trivial implementation
but to only a constant-factor increase in run-time complexity.

\vspace{2em}

\section{Perfect Stack Precision}
\label{sec:perfect}
We next formalize what is meant by a static analysis with \textit{perfect stack precision} by using an abstract abstract machine (AAM) 
\cite{might:2010:aam} with unbounded stacks within each machine configuration. 
We then review the existing polynomial-time methods for computing an analysis with equivalent precision to this machine: PDCFA and AAC.

\subsection{Unbounded-Stack Analysis}
In the same manner as previous work on this topic,
we formalize perfect stack precision
using a static analysis that
leaves the structure of stacks fully unabstracted.
Each frame of this unbounded stack is itself abstract because its environment is abstract and references the abstracted value store.
States and configurations, however, directly contain lists of such frames that are unbounded in length. 
Environments, closures, stack frames, flow sets, and value stores are otherwise abstracted in the same manner as the finite machine 
of section~\ref{sec:background:abstract}.
To differentiate this from the machines we have seen so far, we call this an unbounded-stack machine.
Components unique to this machine wear hats:

\begin{align*}
 \astate \in \aState &\defas \syn{Exp} \times \sa{Env} \times \sa{Store} \times \sa{Kont}  && \text{[states]}
 \\
 \aenv \in \sa{Env} &\defas \syn{Var} \parto \sa{Addr} && \text{[environments]}
 \\
 \astore \in \sa{Store} &\defas \sa{Addr} \to \hat{D} && \text{[stores]}
 \\
 \hat{d} \in \hat{D} &\defas \PowSm{ \widehat{Clo} }  && \text{[flow-sets]}
 \\
 \aclo \in \sa{Clo} &\defas \syn{Lam} \times \sa{Env} && \text{[closures]}
 \\
 \acont \in \sa{Kont} &\defas \sa{Frame}^{*} && \text{[whole stacks]}
 \\
 \hat{\phi} \in \sa{Frame} &\defas \syn{Var} \times \syn{Exp} \times \sa{Env} && \text{[stack frames]}
 \\
 \aaddr \in \sa{Addr} &\text{ is a finite set} && \text{[addresses]}
\end{align*}
Our atomic-expression evaluator works just as before:
\begin{align*}
  \hat{\mathcal{A}} : \syn{AExp} \times \sa{Env} \times \sa{Store} \parto \hat{D}
\end{align*}
\vspace{-0.5cm}
\begin{align*}
  \hat{\mathcal{A}}(\vv, \hat{\rho}, \hat{\sigma}) &\defas \hat{\sigma}(\hat{\rho}(\vv))  && \text{[variable lookup]}
  \\
  \hat{\mathcal{A}}(\lam, \hat{\rho}, \hat{\sigma}) &\defas \set{ (\lam, \hat{\rho}) }  && \text{[closure creation]}
\end{align*}
As does a monovariant allocator:
\begin{align*}
  \sa{alloc} : \syn{Var} \times \hat{\Sigma} \to \sa{Addr}
\end{align*}
\vspace{-0.5cm}
\begin{align*}
  \sa{alloc}_{0}(x,\hat{\varsigma}) &\defas x
\end{align*}
This may be tuned to any other allocation strategy as easily as before.

We now define a non-deterministic unbounded-stack-machine transition relation $(\pdTo) \subseteq \hat{\Sigma} \times \hat{\Sigma}$
and a rule for call-site transitions:
\begin{align*}
 \overbrace{
 (\letiform{y}{\appform{\fexpr}{\aexpr}}{\expr},
  \aenv,
  \astore,
  \acont)}^{\astate}
 &\pdTo
 (\expr',
  \aenv',
  \astore',
  \hat{\phi}\!:\!\acont)\text{, where }
\end{align*}
\vspace{-0.5cm}
\begin{align*}
 \hat{\phi} &= (y, \expr, \aenv)
 \\
 (\lamform{x}{\expr'},\aenv_\lambda) &\in
  \aArgEval(\fexpr, \aenv, \astore)
 \\
 \aenv' &= \aenv_\lambda [ x \mapsto \aaddr ]
 \\
 \astore' &= \astore \join [\aaddr \mapsto \aArgEval(\aexpr,\aenv,\astore)]
 \\
 \aaddr &= \aalloc(x,\astate)
\end{align*}
This is slightly simplified from its analogue in $(\fsTo)$.
The definitions of $e'$, $\hat{\rho}'$, and $\hat{\sigma}'$ are effectively identical, but the continuation store and continuation address have been replaced with an unbounded stack $\hat{\phi}\!:\!\hat{\cont}$.

Likewise, the return transition also changes to the following.
\begin{align*}
 \overbrace{(\aexpr,
  \aenv,
  \astore,
  \hat{\phi}\!:\!\acont)}^{\astate}
 &\pdTo
 (\expr,
  \aenv',
  \astore',
  \acont)\text{, where }
\end{align*}
\vspace{-0.5cm}
\begin{align*}
 \hat{\phi} &= (\vv, \expr, \aenv_\cont)
 \\
 \aenv' &= \aenv_\cont[\vv \mapsto \aaddr]
 \\
 \astore' &= \astore \join [\aaddr \mapsto \aArgEval(\aexpr,\aenv,\astore)]
 \\
 \aaddr &= \aalloc(\vv,\astate)
\end{align*}
To follow a return transition, the stack must contain at least one frame. 
Then the appropriate $\expr$ is reinstated with the environment $\hat{\rho}$ extended with an address for $\vv$. 
The store is extended and whatever stack tail existed after $\hat{\phi}$ is the successor's continuation $\hat{\cont}$.

Unbounded-state injection is defined as we would expect:
\begin{align*}
 \hat{\mathcal{I}} : \syn{Exp} \to \hat{\Sigma}
\end{align*}
\vspace{-0.5cm}
\begin{align*}
 \hat{\mathcal{I}}(\expr) \defas (\expr, \varnothing, \bot, \epsilon)
\end{align*}
As before, we lift $(\pdTo)$ to obtain a monotonic na\"ive collecting relation $(\pdToS)$ for a program $e_0$ that is 
defined over sets of unbounded-states:
\begin{align*}
  \hat{s} \in \hat{S} &\defas \PowSm{\hat{\Sigma}}
\end{align*}
\vspace{-0.5cm}
\begin{align*}
  \hat{s} \pdToS \hat{s}' &\defas \hat{s}' = \set{ \hat{\varsigma}'\ \vert\ \hat{\varsigma} \in \hat{s} \wedge \hat{\varsigma} \pdTo \hat{\varsigma}' } \cup \set{ \hat{\mathcal{I}}(e_0) }
\end{align*}
This analysis is approximate but remains incomputable because the stack can grow without bound.
Put another way, the height of the lattice $(\hat{S}, \cup, \cap)$ is infinite and so no finite number of $(\pdToS)$-iterations is guaranteed to obtain a fixed point.

\subsection{Store-Widened Unbounded-Stack Analysis}
As we will be comparing this unbounded-stack analysis to our new technique using precise store-allocated continuations, 
we derive a global-store-widened version as before:
\begin{align*}
 \hat{\xi} \in \hat{\Xi} &\defas \hat{R} \times \sa{Store}  && \text{[state-spaces]}
 \\
 \hat{r} \in \hat{R} &\defas \PowSm{\sa{C}}  && \text{[reachable configs.]}
 \\
 \hat{c} \in \hat{C} &\defas \syn{Exp} \times \sa{Env} \times \sa{Kont}  && \text{[configurations]}
\end{align*}
A widened transfer function $(\pdToW)$ is defined in terms of \mbox{$(\pdTo)$} in exactly the same manner as $(\fsToW)$ was derived from $(\fsTo)$ except that we now
have only a single global value store and no continuation store:
\begin{align*}
  (\pdToW) : \hat{\Xi} \to \hat{\Xi}
\end{align*}
\vspace{-0.5cm}
\begin{align*}
 (\hat{r},
  \hat{\sigma})
 &\pdToW
 (\hat{r}',
  \hat{\sigma}')\text{, where }
\end{align*}
\vspace{-0.5cm}
\begin{align*}
  \hat{s} &= \set{ \hat{\varsigma}\ \vert\ (e, \hat{\rho}, \hat{\cont}) \in \hat{r} \wedge (e, \hat{\rho}, \hat{\sigma}, \hat{\cont}) \pdTo \hat{\varsigma} } 
	\cup \set{ \hat{\mathcal{I}}(e_0) }
  \\
  \hat{r}' &= \set{ (e, \hat{\rho}, \hat{\cont})\ \vert\ (e, \hat{\rho}, \hat{\sigma}, \hat{\cont}) \in \hat{s} }
  \\
  \hat{\sigma}' &= \!\!\!\!\!\!\bigsqcup_{(\_, \_, \hat{\sigma}'', \_) \in \hat{s}}\!\!\!\!\!\hat{\sigma}''
\end{align*}

\subsection{Pushdown Control-Flow Analysis (PDCFA)}
Pushdown control-flow analysis (PDCFA) is a strategy for creating a computable equivalent to the precision of our unbounded-stack machine
at a quadratic-factor increase to the complexity class of the underlying finite analysis (e.g., monovariant or $1$-call-sensitive) 
\cite{earl:2010:pushdown}.
This strategy tracks both reachable states (or in the store-widened case, configurations) as well as push or pop edges between them.
A quadratic blow up comes from the fact that each pair of reachable states may have an explicitly-tracked edge between them.
These edges implicitly represent, as possible paths through the graph, the stacks explicitly represented in the unbounded-stack machine.
This graph precisely describes the regular expression of all stacks reachable in the pushdown states of the unbounded-stack analysis.

PDCFA formalizes a \textit{Dyke state graph} for this.
Where a sequence of pushes may be repeated \textit{ad infinitum}, a Dyke state graph explicitly represents
a cycle of push edges and a cycle of pop edges finitely. 
Broadly speaking, this is also how AAC and our adaptive continuation allocator work, except that such cycles are represented
in the store instead of the state graph.
A Dyke state graph is a state transition graph where each edge is annotated with either a frame push, a frame pop, or an epsilon.
The set of continuations for a particular state in a Dyke state graph is determined by the pushes and pops along the paths 
that reach that state.

To formalize these Dyke state graphs, we reuse some components of our unbounded-stack machine, continuing to use hats 
as these machines are closely related:
\begin{align*}
  \hat{g} \in \hat{G} &\defas \hat{V} \times \hat{E} \times \sa{Store}  && \text{[Dyke graph]} 
  \\
  \hat{v} \in \hat{V} &\defas \PowSm{ Q }  && \text{[Dyke vertices]}
  \\
  \hat{q} \in \hat{Q} &\defas \syn{Exp} \times \sa{Env}  && \text{[Dyke configs.]} 
  \\
  \hat{e} \in \hat{E} &\defas \PowSm{ \hat{Q} \times \sa{Frame}_{\pm} \times \hat{Q} }  && \text{[Dyke edges]} 
  \\
  \hat{\phi}_\pm \in \sa{Frame}_\pm &\defas \sa{Frame} \times \set{ \textbf{push}, \textbf{pop} } && \text{[edge actions]} 
\end{align*}
For readability, we style an edge $(\hat{q}, (\hat{\phi}, \textbf{push}), \hat{q}') \in \hat{e}$ like so:
\begin{align*}
  \hat{q} \overset{\hat{\phi}^{+}}{\longrightarrow} \hat{q}' \in \hat{e}
\end{align*}

It would be too verbose to formalize all the machinery required to compute a valid Dyke state graph.
Instead, we define it from a completed unbounded-stack analysis $\hat{\xi}$.
The function $\mathcal{DSG} : \hat{\Xi} \to \hat{G}$ produces a Dyke state graph from a fixed-point $\hat{\xi}$ for $(\pdToW)$.
The graph $\hat{g} = \mathcal{DSG}(\hat{\xi})$ is a valid Dyke state graph analysis for a program $e_0$ 
when $\hat{\xi}$ is the unbounded-stack analysis of $e_0$.
\begin{align*}
  \mathcal{DSG}(\overbrace{(\hat{r}, \hat{\sigma})}^{\hat{\xi}}) &\defas \overbrace{(\hat{v}, \hat{e}, \hat{\sigma})}^{\hat{g}}, \text{where}
\end{align*}
\vspace{-0.5cm}
\begin{align*}
  \hat{v} &= \set{ (\expr, \hat{\rho})\ \ \vert\ \ (\expr, \hat{\rho}, \hat{\cont}) \in \hat{r} }
  \\
  \hat{e} &= \set{ (\expr, \hat{\rho}) \overset{\hat{\phi}^{+}}{\longrightarrow} (\expr', \hat{\rho}')\ \ \vert\ \ (\expr, \hat{\rho}, \hat{\cont}) \in \hat{r} 
  \\
	&\ \ \ \ \ \ \ \ \ \ \ \ \ \ \ \ \wedge (\expr, \hat{\rho}, \hat{\sigma}, \hat{\cont}) \pdTo (\expr', \hat{\rho}', \hat{\sigma}, \hat{\phi}\!:\!\hat{\cont}) }
  \\
  &\ \cup \set{ (\expr, \hat{\rho}) \overset{\hat{\phi}^{-}}{\longrightarrow} (\expr', \hat{\rho}')\ \ \vert\ \ (\expr, \hat{\rho}, \hat{\cont}) \in \hat{r} 
  \\
	&\ \ \ \ \ \ \ \ \ \ \ \ \ \ \ \ \wedge (\expr, \hat{\rho}, \hat{\sigma}, \hat{\phi}\!:\!\hat{\cont}) \pdTo (\expr', \hat{\rho}', \hat{\sigma}, \hat{\cont}) } 
\end{align*}

Although we do not formalize transition relations for Dyke state graphs themselves, it will be helpful for us to illustrate the major source of
additional complexity in engineering a PDCFA directly. 
In the finite-state analysis, a transition is able to trivially compute a set of stacks by looking up the current 
continuation address in the continuation store.
In the unbounded-stack analysis, a transition is able to trivially compute the stack by looking at the final component of
the state or configuration being transitioned.
In a Dyke state graph, canceling sequences of pushes and pops may place the set of topmost stack frames on edges arbitrarily distant from the configuration $\hat{q}$
being transitioned.
In this way, the implicitness of stacks in a Dyke state graph obfuscates one of the most common operations needed to 
compute the analysis (i.e., stack introspection).
As an example, observe how the topmost stack frame $\hat{\phi}_0$ for $\hat{q}_3$ is located elsewhere in the graph:
\begin{equation*}
  \xymatrix{
    \hat{q}_0 \ar[r]^{\hat{\phi}_0^{+}} &
    \hat{q}_1 \ar[r]^{\hat{\phi}_1^{+}} &
    \hat{q}_2 \ar[r]^{\hat{\phi}_1^{-}} &
    \hat{q}_3 
  } 
\end{equation*}

PDCFA therefore requires a non-trivial algorithm for stack introspection \cite{earl:2012:introspectivepushdown} and extra analysis 
machinery overall.
Specifically, PDCFA requires the inductive maintenance of an \textit{epsilon closure graph} in addition to the Dyke state graph
as seen in the following.
\begin{equation*}
  \xymatrix{
    \hat{q}_0 \ar[r]^{\hat{\phi}_0^{+}} &
    \hat{q}_1 \ar[r]^{\hat{\phi}_1^{+}} \ar @{-->} @(u,u) [rr]^\epsilon &
    \hat{q}_2 \ar[r]^{\hat{\phi}_1^{-}} &
    \hat{q}_3 
  } 
\end{equation*}
This structure makes all sequences of canceling stack actions explicit as an epsilon edge.
As we will see, this epsilon closure graph represents unnecessary additional complexity for both computer and analysis developer.

\subsection{Precise Allocation of Continuations (AAC)}
Abstracting abstract control (AAC)~\cite{johnson:2014:aac} is another polynomial-time method for obtaining perfect stack precision.
This technique works by store-allocating continuations using addresses unique enough to ensure no spurious merging and, like PDCFA, does not require
foreknowledge of the polyvariance (e.g., context sensitivity) being used in the value store.
The method is worse than PDCFA's quadratic-factor increase in run-time complexity.
In the monovariant and store-widened case, its authors believe it to be in $O(n^8)$~\cite{johnson:2015:complexityemail}. 
However, AAC makes perfect stack precision available \textit{for free} in terms of development cost (i.e., labor).

Given the standard finite-state abstraction we built up in section~\ref{sec:background:abstract}, we can define AAC's essential strategy in a single line:
\begin{align*}
  \widetilde{\var{alloc}}_{\cont\ \!\scalebox{0.5}{\text{AAC}}}((\expr, \tilde{\rho}, \tilde{\sigma}, \tilde{\cont}), \expr', \tilde{\rho}', \tilde{\sigma}') 
	&\defas (\expr', \tilde{\rho}', \expr, \tilde{\rho}, \tilde{\sigma})
\end{align*}
That is, continuations are stored at an address unique to the target state's expression $e'$ and environment $\tilde{\rho}'$ as well as the source state's 
expression $e$, environment $\tilde{\rho}$, and store $\tilde{\sigma}$.

We have simplified AAC slightly and translated its notation to give this definition in the terms of our framework.
A more faithful presentation of AAC shows fundamental differences between their framework and ours.
AAC uses an eval-apply semantics and explodes each flow set into a set of distinct states across every application.
The exact address AAC proposes using is $((\lamform{y}{e'}, \tilde{\rho}_\lambda), \widetilde{clo}, \tilde{\sigma})$ (Figure~7 in \cite{johnson:2014:aac}) 
where $(\lamform{y}{e'}, \tilde{\rho}_\lambda)$ is the target closure of an application, $\widetilde{\clo}$ is one particular abstract closure flowing to $y$, 
and $\tilde{\sigma}$ is the value store in the source state. 
Our components $e'$ and $\tilde{\rho}'$ are isomorphic to the target closure in the sense that $e'$ is identical and $\tilde{\rho}'$ is produced from 
the combination of $\tilde{\rho}_\lambda$ and $y$.
The source state's components ($e$, $\tilde{\rho}$, $\tilde{\sigma}$) are not as specific as $\widetilde{\clo}$ and $\tilde{\sigma}$, but
they do uniquely determine a flow set $\tilde{d}$ (the result of $\tilde{\mathcal{A}}$ invoked on $f$) that contains $\widetilde{\clo}$.
However, a semantics using an eval-apply factoring like AAC is needed
to obtain a unique continuation address for every closure propagated across an application.
This would have significantly complicated our presentation of the finite-state analysis, and in section~\ref{sec:forfree} we will see that being specific to $\widetilde{\clo}$
adds run-time complexity to an analysis without adding any precision.

The intuition for AAC is that by allocating continuations specific to both the source state and target state of a call-site transition, no merging may occur when returning
according to this (transition-specific) continuation-address.
If we were to add some arbitrary additional context sensitivity (e.g., $3$-call-sensitivity), this information would be encoded in $\tilde{\rho}'$
and inherited by $\widetilde{\var{alloc}}_{\cont\ \!\scalebox{0.5}{\text{AAC}}}$ upon producing an address.
Including this target-state binding environment in continuation addresses is the key reason why AAC allocates precise continuation addresses.

In section~\ref{sec:forfree}, we will see that only the target state's expression $e'$ and environment $\tilde{\rho}'$ are truly necessary 
for obtaining the perfect stack precision of our unbounded-stack machine.
Including components of the transition's source state, its store, or its flow set only adds run-time complexity that is unnecessary for achieving perfect stack precision.
This optimization extends AAC's core insight to be computationally \textit{for free} while remaining precise and developmentally \textit{for free}.

\section{Perfect Stack Precision for Free}
\label{sec:forfree}
The primary intuition of our work can be illustrated by considering a set of intraprocedural configurations for some 
function invocation as in the following with $\tilde{c}_0$ through $\tilde{c}_5$.
\begin{center}
\begin{tikzpicture}[node distance=0.4cm]
\path node (caller) {};
\path node (c0) [below right=of caller] {$\tilde{c}_0$};
\path node (c1) [right=of c0] {$\tilde{c}_1$};
\path node (g0) [below right=of c1] {$f$};
\path node (c2) [above right=of g0] {$\tilde{c}_2$};
\path node (c3) [right=of c2] {$\tilde{c}_3$};
\path node (h0) [below right=of c3] {$g$};
\path node (c4) [above right=of h0] {$\tilde{c}_4$};
\path node (c5) [right=of c4] {$\tilde{c}_5$};
\path node (ret) [above right=of c5] {};

\path[draw,->] (caller) -- (c0);
\path[draw,->] (c0) -- (c1);
\path[draw,->] (c1) -- (g0.west);
\path[draw,->] (g0.east) -- (c2);
\path[draw,->] (c2) -- (c3);
\path[draw,->] (c3) -- (h0.west);
\path[draw,->] (h0.east) -- (c4);
\path[draw,->] (c4) -- (c5);
\path[draw,->] (c5) -- (ret);

\path[draw,->,dashed] (c1) -- (c2);
\path[draw,->,dashed] (c3) -- (c4);
\end{tikzpicture}
\end{center}
The configuration $\tilde{c}_0$ represents the entry point to the function, and its incoming edge is a call-site transition.
The configuration $\tilde{c}_5$ represents an exit point for the function, and its outgoing edge is a return-point transition.
A transition where one intraprocedural configuration follows another, like $\tilde{c}_0 \rightarrow \tilde{c}_1$, is not technically possible in our
restricted ANF language but in more general languages would be.
The function's body may call other functions $f$ and $g$ whose configurations are not a part of the same intraprocedural set of nodes.
The primary insight behind our technique is that \textit{a set of intraprocedural configurations (like $\tilde{c}_0$ through $\tilde{c}_5$) 
necessarily share the exact same set of genuine continuations} (in this example, the incoming call-sites for $\tilde{c}_0$).

We call the set of configurations $\tilde{c}_0$ through $\tilde{c}_5$ an \textit{intraprocedural group} because they are those configurations that represent the body
of a function for a single abstract invocation---defined by an entry point unique to some $e$ and $\tilde{\rho}$.
Our central insight is to notice that this idea of an intraprocedural group also corresponds to those configurations that share a single set of continuations.
Our finite-state machine represents this set of continuations with a continuation address, so if this continuation address is precise enough to uniquely determine an 
intraprocedural group's entry point ($e$ and $\tilde{\rho}$), then it can be used for all configurations in that same group. 
Thus our allocator may be defined as simply:
\begin{align*}
  \widetilde{\var{alloc}}_{\cont\ \!\scalebox{0.5}{\text{P4F}}}((\expr, \tilde{\rho}, \tilde{\sigma}, \tilde{\sigma}_\cont, \tilde{a}_\cont), \expr', \tilde{\rho}', \tilde{\sigma}') 
  = (\expr', \tilde{\rho}')
\end{align*}
The impact of this change is easily missed, belied by its simplicity.
We allocate a continuation based only on the expression and environment at the entry point of each intraprocedural sequence of \texttt{let}-forms and it is precisely reinstated
when each of the calls in these \texttt{let}-forms return.

Recall that the monovariant continuation allocator in our example from section~\ref{sec:background:imprecision} resulted in return-flow merging 
because a single continuation address was being used for transitions to multiple entry points of different intraprocedural groups. 
More generally, return-flow merging occurs in a finite-state analysis when, at some return-point configuration $(\aexpr, \tilde{\rho}_{\aexpr}, \tilde{a}_\cont)$, the set 
of continuations for $\tilde{a}_\cont$ is less precise than the set of source configurations that transition to the entry point $(\expr, \tilde{\rho})$ 
of the same intraprocedural group. 
Because we allocate a continuation address specific to this exact entry point, and because that address is propagated by shallowly copying it to each return point for the same
intraprocedural group, the set of continuations will be as precise as the set of source configurations transitioning to the same entry point in all cases.
This means the return-flow merging problem cannot occur when using $\widetilde{\var{alloc}}_{\cont\ \!\scalebox{0.5}{\text{P4F}}}$
and neither is there a run-time overhead for stack introspection.

In section~\ref{sec:proof}, we formalize these intuitions and provide a proof that our unbounded-stack analysis simulates (i.e., is no more precise than) a finite-state analysis when using 
$\widetilde{\var{alloc}}_{\cont\ \!\scalebox{0.5}{\text{P4F}}}$.

\subsection{Complexity}
To see why this allocation scheme leads to only a constant-factor overhead,
consider a set of configurations $\tilde{c}_0,\tilde{c}_1,\cdots,\tilde{c}_n$ that form an intraprocedural group and
a set of call sites transitioning to $\tilde{c}_0$ with the continuations $\tilde{\kappa}_0,\tilde{\kappa}_1,\cdots,\tilde{\kappa}_{m-1}$.
We can diagrammatically visualize this as the following.
\begin{center}
\begin{tikzpicture}[node distance=0.8cm]
\path node (k1) {};
\path node (k2) [below=of k1] {};
\path node (k3) [below=of k2] {};
\path node (k4) [below=of k3] {};

\path node (c0) [right=of k2] {$\tilde{c}_0$};
\path node (c1) [right=of c0] {$\tilde{c}_1$};
\path node (c2) [right=of c1] {$\tilde{c}_{n-1}$};
\path node (c3) [right=of c2] {$\tilde{c}_n$};

\path node (r2) [right=of c3] {};
\path node (r1) [above=of r2] {};
\path node (r3) [below=of r2] {};
\path node (r4) [below=of r3] {};

\path[draw,->] (k1) -- node [sloped,above] {$\tilde{\kappa}_0$} (c0);
\path[draw,->] (k2) -- node [sloped,above] {$\tilde{\kappa}_1$} (c0);
\path[draw,->] (k4) -- node [sloped,above] {$\tilde{\kappa}_{m-1}$} (c0);
\path (k2) +(0.4cm,-0.3cm) node {$\vdots$};

\path[draw,->] (c0) -- (c1);
\path[] (c1) -- node{$\cdots$} (c2);
\path[draw,->] (c2) -- (c3);

\path[draw,->] (c3) -- node [sloped,above] {$\tilde{\kappa}_0$} (r1);
\path[draw,->] (c3) -- node [sloped,above] {$\tilde{\kappa}_1$} (r2);
\path[draw,->] (c3) -- node [sloped,above] {$\tilde{\kappa}_{m-1}$} (r4);
\path (r2) +(-0.4cm,-0.3cm) node {$\vdots$};
\end{tikzpicture}
\end{center}
Note that, for each call site, there is a corresponding return flow using the same continuation.
Our allocation strategy means that all of the configurations $\tilde{c}_0,\tilde{c}_1,\cdots,\tilde{c}_n$ use
the same continuation address $(\expr,\tilde{\rho})$.
The global continuation store then maps this address to the set $\{\tilde{\kappa}_0,\tilde{\kappa}_1,\cdots,\tilde{\kappa}_{m-1}\}$.

Now consider what must be done if a new call site transitions to $c_0$.
First, the continuation store must be extended to contain the continuation for this new call site, say $\tilde{\kappa}_{m}$,
in the continuation set at the address $(\expr,\tilde{\rho})$.
Then the corresponding return edge transitions must be added.
Note that none of $\tilde{c}_0,\tilde{c}_1,\cdots,\tilde{c}_{n-1}$ need to be modified or accessed.
The only work done here beyond that of the underlying analysis is the extension of the continuation store by
adding $\tilde{\kappa}_{m}$ at $(e, \tilde{\rho})$ and the addition of a corresponding return edge at return points.
Thus, the additional work is a constant factor of the number of times a continuation is added to the continuation store.

A na\"ive analysis might lead us to conclude that this is bounded by the product 
of the number of continuation addresses and the number of continuations.
However, there is a tighter bound.
Each transition adds only one continuation to the continuation store.
Thus the work done is a constant factor of the number of transitions in the underlying analysis.

Note that this differs from AAC, which may make duplicate copies of the continuation set for an intraprocedural group as it produces one for each combination of 
components $\expr$, $\tilde{\rho}$, and $\tilde{\sigma}$ drawn from the source states transitioning to it.
As a consequence, AAC allocates addresses strictly more unique than the target $(\expr', \tilde{\rho}')$ configuration.
Two different source expressions $\expr_0$ 
and $\expr_1$ may both have transitions to $(\expr', \tilde{\rho}')$, but AAC will produce two different target configurations $\tilde{c}'_0$ and $\tilde{c}'_1$ 
because the continuation addresses they allocate will be distinct.
This difference is maintained through the two variants of the function starting at $\expr'$ with environment $\tilde{\rho}'$, and when an 
exit point $\aexpr$ is reached for each, the expression and its environment are the same and propagate the same values to two sets of continuations.
Thus, these continuation addresses and the sets of stacks they represent are kept separate without any benefit.

PDCFA, on the other hand, is more complex for an entirely different reason: the epsilon closure graph.
Without the epsilon closure graph, PDCFA has no way to efficiently determine a topmost stack frame at each return transition.
Both our method and AAC's method make this trivial by propagating an address explicitly to each state.
While our method allows a continuation address to be shallowly propagated across each intraprocedural node in a function, the epsilon closure graph recomputes 
a separate set of incoming epsilon edges for every node.
This means that the number of such edges for a given entry point $(\expr, \hat{\rho})$ is the number of callers times the number of intraprocedural nodes.
This is a quadratic blow up from the number of nodes in a finite-state model.
This is why monovariant store-widened PDCFA is in $O(n^6)$ instead of in $O(n^3)$ like traditional $0$-CFA.
We are able to naturally exploit our insight that each intraprocedural node following an entry point $(\expr, \hat{\rho})$ shares the same set of continuations 
(i.e., the same epsilon edges) by propagating a pointer to this set instead of rebuilding it for each node.
PDCFA is unable to exploit this insight without adding machinery to propagate only a shallow copy of an incoming epsilon edge set intraprocedurally.
It is likely that this insight could also be imported into the PDCFA style of analysis to yield a variant of PDCFA that incurs only a constant-factor overhead, 
but this would require additional machinery.

\subsection{Constant Overhead Requires Store Widening}
That no function can have two entry points that lead to the same exit point is a genuine restriction worth discussing further.
If this were not true, our technique would be precise (assuming multiple entry points are not merged), but it would not 
necessarily be a constant-factor increase in complexity.
The combination of no store widening (per-state value stores) and mutation is a good example of how this situation could arise.

To see how per-state stores can cause a further blow up in complexity, consider a function that is called with two different continuations 
and two different stores.
Without store widening, each store causes a different state to be created for the entry point $(e,\tilde{\rho})$ of the function. 
In the following diagram for example, $\tilde{\varsigma}_1$ is the state for the entry point with one store and $\tilde{\varsigma}'_1$ the state
for the entry point with another store.
\begin{center}
\begin{tikzpicture}[node distance=0.8cm]
\path node (k0) {};
\path node[below right=of k0] (c1) {$\tilde{\varsigma}_1$};
\path node[below left=of c1] (k1) {};
\path node[below right=of k1] (d1) {$\tilde{\varsigma}'_1$};

\path node[right=of c1] (c2) {$\tilde{\varsigma}_2$};
\path node[below right=of c2] (f1) {};
\path node[above right=of f1] (c3) {$\tilde{\varsigma}_3$};
\path node[right=of c3] (c4) {$\tilde{\varsigma}_4$};
\path node[above right=of c4] (r0) {};

\path node[right=of d1] (d2) {$\tilde{\varsigma}'_2$};
\path node[below right=of d2] (f2) {$f$};
\path node[above right=of f2] (d3) {$\tilde{\varsigma}'_3$};
\path node[right=of d3] (d4) {$\tilde{\varsigma}'_4$};
\path node[above right=of d4] (r2) {};

\path[draw,->] (k0) -- node[sloped,above] {$\tilde{\kappa}_1$} (c1);
\path[draw,->] (c1) -- (c2);
\path[draw,->] (c2) -- (f2.north west);
\path[draw,->] (f2.east) -- (c3);
\path[draw,->] (c3) -- (c4);
\path[draw,->] (c4) -- node[sloped,above] {} (r1);

\path[draw,->] (k1) -- node[sloped,above] {$\tilde{\kappa}_2$} (d1);
\path[draw,->] (d1) -- (d2);
\path[draw,->] (d2) -- (f2.west);

\path[draw,->,dashed] (f2.east) -- (d3);
\path[draw,->,dashed] (d3) -- (d4);
\path[draw,->,dashed] (d4) -- node[sloped,above] {} (r2);
\end{tikzpicture}
\end{center}
Now suppose that along both sequences of states there is a call to some function $f$ and that $f$ contains a side effect that causes the previously 
different value stores to become equal. 
For example, in $\tilde{\varsigma}_1$ perhaps the address for $x$ maps to $\{\texttt{\#t}\}$ and in $\tilde{\varsigma}_1'$ it maps to $\{\texttt{\#f}\}$.
If $x$ becomes bound to $\{\texttt{\#t},\texttt{\#f}\}$ along both paths
in the body of $f$, the stores along both paths would become identical.

A problem now arises.
Should $f$ return only to one state using this common store such as $\tilde{\varsigma}_3$ or should
it return to two different states (with identical stores) such as both $\tilde{\varsigma}_3$ and $\tilde{\varsigma}'_3$?
Either choice has drawbacks.
The semantics we have given would naturally yield the latter option, producing two distinct states that differ only by their continuation addresses (their original entry point).
Because these states are otherwise identical, splitting $\tilde{\cont}_1$ and $\tilde{\cont}_2$ into sets represented by two different continuation addresses results in
additional transitions and complexity without any benefit.
Arguably, these continuation sets should be merged and represented by a single address.
This corresponds to the former option and could save on run-time complexity but only at the cost of additional analysis machinery.
This means per-state stores are incompatible with our goal of obtaining perfect stack precision \textit{for free} in both senses (running time and human labor). 

\begin{figure}[t!]
  \centering
\begin{tikzpicture}
  \begin{axis}[
    ybar=1pt,
    ymin=0,
    width=\columnwidth,
    bar width=2.5pt,
    legend cell align=left,
    legend pos=north west,
    legend columns=2,
    xtick=data,
    x tick label style={rotate=90,anchor=east},
    symbolic x coords={mj09, eta, kcfa2, kcfa3, blur, loop2, sat, ack, cpstak, tak},
  ]
  \addplot[fill=black]
    coordinates
    {(mj09,28) (eta,28) (kcfa2,39) (kcfa3,55) (blur,123) (loop2,48) (sat,84) (ack,46) (cpstak,59) (tak,76)};
  \addplot[draw=black]
    coordinates
    {(mj09,14) (eta,15) (kcfa2,11) (kcfa3,13) (blur,19) (loop2,14) (sat,22) (ack,10) (cpstak,14) (tak,10)};
  \addplot[pattern=crosshatch]
    coordinates
    {(mj09,54) (eta,55) (kcfa2,100) (kcfa3,155) (blur,476) (loop2,72) (sat,311) (ack,143) (cpstak,109) (tak,384)};
  \addplot[pattern=north east lines]
    coordinates
    {(mj09,28) (eta,30) (kcfa2,36) (kcfa3,47) (blur,62) (loop2,25) (sat,52) (ack,17) (cpstak,32) (tak,24)};
  \legend{
    AAC Configurations,
    P4F Configurations,
    AAC States,
    P4F States
  }
  \end{axis}
\end{tikzpicture}
  \caption{A monovariant comparison.}
  \label{fig:0cfa} 
\end{figure}

\subsection{Implementation}
We have implemented both our technique and AAC's technique for analysis of a simplified Scheme intermediate language.
This language extends $\syn{Exp}$ with a variety of additional core forms including conditionals, mutation, recursive binding, 
tail calls, and a library of primitive operations.
Our implementation was written in Scala and executed using Scala 2.11 for OSX on an Intel Core i5 (1.3 GHz) with 4GB of RAM.
It is built upon the implementation of Earl et al.~\cite{earl:2012:introspectivepushdown}, which implements both traditional $k$-CFA and PDCFA.
The test cases we ran came from the Larceny R6RS benchmark suite (ack, cpstak, tak) and examples compiled from the previous literature 
on obtaining perfect stack precision (mj09, eta, kcfa2, kcfa3, blur, loop2, sat).
As a sanity check, we have verified that both AAC and our method produce results of equivalent precision in every case.
We ran each comparison using both a monovariant value-store allocator (figure~\ref{fig:0cfa}), and a $1$-call-sensitive polyvariant allocator (figure~\ref{fig:1cfa}).
Across the board, our method requires visiting strictly fewer machine configurations.
In some of these cases the difference is rather small, but in others it is significant.
We saw as much as a $16.0\times$ improvement in the monovariant analysis and as much as
a $10.4\times$ improvement in the context-sensitive analysis.
The mean speedup in terms of states visited was $5.4\times$ and $4.9\times$ in the monovariant and context-sensitive analyses, respectively.

\section{Proof of Precision}
\label{sec:proof}
Proving soundness is fairly straight forward as discussed in section~\ref{sec:background:soundness}, but proving precision poses a greater challenge.
To do this, we first define a simulation relation $(\simXi)$ where $\hat{\xi}
\simXi \tilde{\xi}$ (read as ``$\hat{\xi}$ simulates $\tilde{\xi}$'')
if and only if all stored values and machine configurations in
$\tilde{\xi}$ (including stacks implicit in this configuration) are
accounted for in the unbounded-stack representation $\hat{\xi}$.
Usually, the next step in such a proof would be to show that taking parallel steps preserves precision as in fallacy~\ref{fallacy:1}.
\begin{fallacy}[Steps preserve precision]
\label{fallacy:1}
If $\hat{\xi} \pdTo \hat{\xi}'$ and $\tilde{\xi} \fsTo \tilde{\xi}'$, then
$\hat{\xi} \simXi \tilde{\xi}$ implies $\hat{\xi}' \simXi \tilde{\xi}'$.
\end{fallacy}
However, fallacy~\ref{fallacy:1} is not true.
This is because after some finite number of steps $\tilde{\xi}$ may contain a cycle in its continuation store.
This means that an infinite family of successively longer stacks must also be in $\hat{\xi}$ for precision to hold.
After a finite number of steps, however, all stacks in $\hat{\xi}$ are bounded by a finite length.
Hence, there are stacks that precision says should be in $\hat{\xi}$ that are not.

We thus take a different approach to proving precision.
Before going into the details, the high-level overview of this proof is as follows.
Instead of stepping both $\hat{\xi}$ and $\tilde{\xi}$ in parallel,
we show that successive steps of $\tilde{\xi}$ are all precise
relative to any $\hat{\xi}$ that is already at a fixed point (i.e.,
theorem~\ref{theorem:step_prec_xi} found at the end of this section).
To show this, we need two inductions.
One is over the steps taken by $\tilde{\xi}$, and the other is over the stacks implied by $\tilde{\xi}$.
To separate these inductions, we define a well-formedness property ($\wf$ in figure~\ref{figure:wf}) that we can show is preserved by 
iterative steps from an initial $\tilde{\xi}_0$ (lemma~\ref{lemma:iterated_wf}) and for which we can show that any well-formed $\tilde{\xi}$ 
is precise relative to any $\hat{\xi}$ that is at a fixed point (lemmas~\ref{lemma:step_prec_Store} and~\ref{lemma:step_prec_R}).

The well-formedness property is defined in terms of two additional concepts.
First, we formally define the stacks, $\tilde{\psi}$, implied by a continuation address, $\tilde{a}_\cont$, and continuation store, $\tilde{\sigma}_\cont$, 
in terms of a relation $\tilde{\psi} \inpsi \tilde{a}_\cont\ \text{(via $\tilde{\sigma}_\cont$)}$ that we define. 
Then, we define paths, $(\pdPath)$ and $(\fsPath)$, through $\hat{\xi}$ and $\tilde{\xi}$ in terms of a sequence of state steps, $(\pdTo)$ and $(\fsTo)$,
between states represented by configurations in $\hat{\xi}$ and $\tilde{\xi}$. 
This allows us to prove the precision of any well-formed $\tilde{\xi}$ (i.e., lemma~\ref{lemma:step_prec_R}) through
a logical chain informally shown in figure~\ref{figure:logical-chain}.
\begin{figure}[t!]
  \centering
\begin{tikzpicture}
  \begin{axis}[
    ybar=1pt,
    ymin=0,
    width=\columnwidth,
    bar width=2.5pt,
    legend cell align=left,
    legend pos=north west,
    legend columns=2,
    xtick=data,
    x tick label style={rotate=90,anchor=east},
    symbolic x coords={mj09, eta, kcfa2, kcfa3, blur, loop2, sat, ack, cpstak, tak},
  ]
  \addplot[fill=black]
    coordinates
    {(mj09,45) (eta,23) (kcfa2,139) (kcfa3,253) (blur,93) (loop2,205) (sat,407) (ack,124) (cpstak,46) (tak,214)};
  \addplot[draw=black]
    coordinates
    {(mj09,23) (eta,16) (kcfa2,30) (kcfa3,46) (blur,36) (loop2,33) (sat,69) (ack,28) (cpstak,17) (tak,34)};
  \addplot[pattern=crosshatch]
    coordinates
    {(mj09,94) (eta,39) (kcfa2,432) (kcfa3,1113) (blur,182) (loop2,354) (sat,1998) (ack,629) (cpstak,73) (tak,1387)};
  \addplot[pattern=north east lines]
    coordinates
    {(mj09,55) (eta,26) (kcfa2,135) (kcfa3,275) (blur,75) (loop2,68) (sat,199) (ack,82) (cpstak,27) (tak,133)};
  \legend{
    AAC Configurations,
    P4F Configurations,
    AAC States,
    P4F States
  }
  \end{axis}
\end{tikzpicture}
  \caption{A $1$-call-sensitive comparison.}
   \label{fig:1cfa}
\end{figure}
\begin{figure}
\noindent\begin{tikzpicture}[->,>=stealth]
\node (ul) {$(\tilde{c}_0, \epsilon) \fsPath ((e, \tilde{\rho}, \tilde{a}_\cont), \tilde{\psi})$};
\node (ur) [right of=ul,node distance=15em] {$\begin{aligned} &(e, \tilde{\rho}, \tilde{a}_\cont) \in \tilde{r} \\ &\tilde{\psi} \inpsi \tilde{a}_\cont\ \text{(via $\tilde{\sigma}_\cont$)} \end{aligned}$};
\node (ll) [below of=ul,node distance=5em] {$\hat{c}_0 \pdPath (e, \hEnv(\tilde{\rho}), \hKont(\tilde{\psi}))$};
\path (ur |- ll) node (lr) {$(e, \hEnv(\tilde{\rho}), \hKont(\tilde{\psi})) \in \hat{r}$};

\path [draw] (ur) -- node [above] {\small{Lemma~\ref{lemma:stacks_have_paths}}} (ul);
\path [draw] (ul) -- node [right] {\small{Lemma~\ref{lemma:path_til_hat}}} (ll);
\path [draw] (ll) -- node [above] {\small{Lemma~\ref{lemma:path_end_point}}} (lr);
\path [draw, dashed] (ur) -- node [right] {\small{Lemma~\ref{lemma:step_prec_R}}} (lr);
\end{tikzpicture}
\caption{The logical chain proving lemma~\ref{lemma:step_prec_R}.  Assumes $(\hat{r}, \hat{\sigma})$ is at a fixed point and 
$(\tilde{r}, \tilde{\sigma}, \tilde{\sigma}_\cont)$ is well-formed.}
\label{figure:logical-chain}
\end{figure}
In lemma~\ref{lemma:stacks_have_paths}, we show that for any configuration $(e, \tilde{\rho}, \tilde{a}_\cont)$ in the $\tilde{r}$ of a $\tilde{\xi}$ and any $\tilde{\psi}$ implied by $a_\cont$ with the continuation store $\sigma_\cont$ of $\tilde{\xi}$, there exists a path from the initial configuration $\tilde{c}_0$ with an empty stack $\epsilon$ to the configuration $(e, \tilde{\rho}, \tilde{a}_\cont)$ with the implied stack $\tilde{\psi}$.
In lemma~\ref{lemma:path_til_hat}, we then show that there exists a corresponding path in $\hat{\xi}$ from $\hat{c}_0$ to $(e, \hEnv(\tilde{\rho}), \hKont(\tilde{\psi}))$.
Finally, in lemma~\ref{lemma:path_end_point}, we show that the endpoint of that path is in $\hat{\xi}$ and thus the set of reachable configurations in $\tilde{\xi}$ is precise relative 
to $\hat{\xi}$ (lemma~\ref{lemma:step_prec_R}).

\subsection{Definitions and Assumptions}
In order to prove precision, we first require that the address spaces for both $\hat{\sigma}$ and $\tilde{\sigma}$ correspond as follows.
\begin{assumption}[Address equivalence] There exists an equivalence $(\equiv_{Addr})$ between finite-state-machine addresses $(\widetilde{\var{Addr}})$ 
and unbounded-stack-machine addresses $(\sa{Addr})$ that can be decomposed into a bijection $\sa{Addr}\ \underset{\!\!\scalebox{0.6}[0.6]{$H_{\scalebox{0.7}[0.7]{$\var{Addr}$}}$}}{\overset{\!\!\scalebox{0.6}[0.6]{$T_{\scalebox{0.7}[0.7]{$\var{Addr}$}}$}}{\rightleftharpoons}}\ \widetilde{\var{Addr}}$.
\label{simulation-alloc}
\end{assumption}

\begin{figure*}
\begin{align*}
  ((e, \tilde{\rho}, \tilde{a}_\cont), \tilde{\psi}) &\fsPath ((e, \tilde{\rho}, \tilde{a}_\cont), \tilde{\psi})\ \ \text{(via $(\tilde{r}, \tilde{\sigma}, \tilde{\sigma}_\cont)$, $(\tilde{r}', \tilde{\sigma}', \tilde{\sigma}'_\cont)$)}, \text{where}
  \\
  (e, \tilde{\rho}, \tilde{a}_\cont) &\in \tilde{r}'
  \qquad\qquad
  \tilde{\psi} \inpsi \tilde{a}_\cont\ \ \text{(via $\tilde{\sigma}'_\cont$)}
\end{align*}
\medskip
\begin{align*}
  ((e, \tilde{\rho}, \tilde{a}_\cont), \tilde{\psi}) &\fsPath ((e', \tilde{\rho}_\cont', \tilde{a}_\cont'), \tilde{\psi}')\ \ \text{(via $(\tilde{r}, \tilde{\sigma}, \tilde{\sigma}_\cont)$, $(\tilde{r}', \tilde{\sigma}', \tilde{\sigma}'_\cont)$)}, \text{where}
  \\
  \tilde{\rho}_\cont' &= \tilde{\rho}_\cont[x \mapsto \widetilde{alloc}(x, (\aexpr, \tilde{\rho}'', \tilde{\sigma}, \tilde{\sigma}_\cont, \tilde{a}_\cont''))]
  \\
  (\aexpr, \tilde{\rho}'', \tilde{a}_\cont'') &\in \tilde{r}
  \qquad\qquad
  (e', \tilde{\rho}_\cont', \tilde{a}_\cont') \in \tilde{r}'
  \qquad\qquad
  ((x, e', \tilde{\rho}_\cont), \tilde{a}_\cont') \in \tilde{\sigma}_\cont(\tilde{a}_\cont'')
  \\
  ((e, \tilde{\rho}, \tilde{a}_\cont), \tilde{\psi}) &\fsPath ((\aexpr, \tilde{\rho}'', \tilde{a}_\cont''), ((x, e', \tilde{\rho}_\cont), \tilde{a}_\cont') : \tilde{\psi}')\ \ \text{(via $(\tilde{r}, \tilde{\sigma}, \tilde{\sigma}_\cont)$, $(\tilde{r}', \tilde{\sigma}', \tilde{\sigma}'_\cont)$)}
  \\
  (\aexpr, \tilde{\rho}'', \tilde{\sigma}, \tilde{\sigma}_\cont, \tilde{a}_\cont'') &\fsToSub (e', \tilde{\rho}_\cont', \tilde{\sigma}', \tilde{\sigma}'_\cont, \tilde{a}_\cont')
\end{align*}
\medskip
\begin{align*}
  ((e, \tilde{\rho}, \tilde{a}_\cont), \tilde{\psi}) &\fsPath ((e', \tilde{\rho}', \tilde{a}_\cont'), ((x, e'', \tilde{\rho}''), \tilde{a}_\cont'') : \tilde{\psi}')\ \ \text{(via $(\tilde{r}, \tilde{\sigma}, \tilde{\sigma}_\cont)$, $(\tilde{r}', \tilde{\sigma}', \tilde{\sigma}'_\cont)$)}, \text{where}
  \\
  ((e, \tilde{\rho}, \tilde{a}_\cont), \tilde{\psi}) &\in \tilde{r}
  \qquad\qquad
  ((x, e'', \tilde{\rho}''), \tilde{a}_\cont'') \in \tilde{\sigma}'_\cont(\tilde{a}_\cont')
  \\
  ((e, \tilde{\rho}, \tilde{a}_\cont), \tilde{\psi}) &\fsPath ((\letiform{x}{\appform{f}{\aexpr}}{e''}, \tilde{\rho}'', \tilde{a}_\cont''), \tilde{\psi}')\ \ \text{(via $(\tilde{r}, \tilde{\sigma}, \tilde{\sigma}_\cont)$, $(\tilde{r}', \tilde{\sigma}', \tilde{\sigma}'_\cont)$)}
  \\
  (\letiform{x}{\appform{f}{\aexpr}}{e''}, \tilde{\rho}'', \tilde{\sigma}, \tilde{\sigma}_\cont, \tilde{a}_\cont'') &\fsToSub (e', \tilde{\rho}', \tilde{\sigma}', \tilde{\sigma}'_\cont, \tilde{a}_\cont')
\end{align*} 
\caption{Finite-state paths}
\label{figure:finite-state-paths}
\end{figure*}

From this we can build up equivalences ($\simEnv$, $\simFrame$, $\simClo$) and precision relations ($\simXi$, $\simStore$, $\simR$, $\simD$, $\simStore$) for all the components of our machine.
In addition, we can define conversion bijections ($\hEnv$, $\tEnv$, $\hFrame$, $\tFrame$, $\hClo$, $\tClo$, $\hD$, $\tD$, $\hStore$, $\tStore$) for most but not
all of the components.
These relations have the following signatures.
\begin{align*}
  (\simXi) &\subseteq \hat{\Xi} \times \tilde{\Xi}  && \text{[state-space precision]}
  \\
  (\simStore) &\subseteq \sa{Store} \times \widetilde{\var{Store}}  && \text{[store precision]}
  \\
  (\simR) &\subseteq \hat{R} \times \tilde{R} \times \widetilde{\var{KStore}} && \text{[reachable configs.\ precision]} 
  \\
  (\simEnv) &\subseteq \sa{Env} \times \widetilde{\var{Env}}  && \text{[env.\ equivalence]}
  \\
  (\simFrame) &\subseteq \sa{Frame} \times \widetilde{\var{Frame}}  && \text{[frame equivalence]}
  \\
  (\simD) &\subseteq \hat{D} \times \tilde{D}  && \text{[flow-set precision]}
  \\
  (\simClo) &\subseteq \sa{Clo} \times \widetilde{\var{Clo}}  && \text{[closure equivalence]}
\end{align*}

In addition, with the following assumption, we require that the value allocators respect the address correspondence.
\begin{assumption}[Allocation equivalence]
\label{assumption:prec_alloc}
If $\hat{\rho} \simEnv \tilde{\rho}$, $\hat{\sigma} \simStore \tilde{\sigma}$, and 
$\tilde{\psi} \inpsi \tilde{a}_\cont\ \text{(via $\tilde{\sigma}_\cont$)}$, then:
\begin{align*}
  \sa{alloc}(x, (e, \hat{\rho}, \hat{\sigma}, \hKont(\tilde{\psi}))) 
  \equiv_{Addr}   
  \widetilde{\var{alloc}}(x, (e, \tilde{\rho}, \tilde{\sigma}, \tilde{\sigma}_\cont, \tilde{a}_\cont))
\end{align*}
\end{assumption}
This assumption uses $(\inpsi)$ and $\hKont$, which deal with the stacks implied by an address and continuation store.
We define an implied stack as an unbounded list of finite-state continuations $\tilde{\cont}$:
\begin{align*}
  \tilde{\psi} \in \tilde{\Psi} = \widetilde{\var{Kont}}^{*} && \text{[implied stack]} 
\end{align*}
These $\tilde{\psi}$ are an intermediate representation in that, like $\hat{\cont}$, their structure is unbounded, but each element is taken
directly from the finite-state machine.
We define a trinary relation $(\inpsi)$ that specifies which $\tilde{\psi}$ are implied by an $\tilde{a}_\cont$ in $\tilde{\sigma}_\cont$. 
This has the following base case and inductive case:
\begin{align*}
  \epsilon &\inpsi \tilde{a}_{\text{halt}}\ \text{(via $\tilde{\sigma}_\cont$)}\\
\end{align*}
\begin{align*}
  (\tilde{\phi}, \tilde{a}_\cont') \in \tilde{\sigma}_\cont(\tilde{a}_\cont&) \wedge \tilde{\psi} \inpsi \tilde{a}_\cont'\ \text{(via $\tilde{\sigma}_\cont$)} \wedge \tilde{a}_\cont \neq \tilde{a}_\text{halt}
  \\ &\ \ \ \ \ \implies
  ((\tilde{\phi}, \tilde{a}_\cont'):\tilde{\psi}) \inpsi \tilde{a}_\cont\ \text{(via $\tilde{\sigma}_\cont$)}
\end{align*}

Then given such a $\tilde{\psi}$, we can directly construct its equivalent unbounded stack:
\begin{align*}
  \hKont(\epsilon) &\defas \epsilon &&&
  \hKont((\tilde{\phi}, \tilde{a}_\cont) : \tilde{\psi}) &\defas \hFrame(\tilde{\phi}) : \hKont(\tilde{\psi})
\end{align*}
Also, given a finite-state configuration $\tilde{c}$ and an implicit stack $\tilde{\psi}$, we can construct an unbounded-stack configuration $\hat{c}$:
\begin{align*}
  \hC((e, \tilde{\rho}, \tilde{a}_\cont), \tilde{\psi}) \defas (e, \hEnv(\tilde{\rho}), \hKont(\tilde{\psi}))
\end{align*}

\begin{figure}
\begin{align*}
  (e, \hat{\rho}, \hat{\cont}) &\pdPath (e, \hat{\rho}, \hat{\cont})\ \ \text{(via $\hat{r}$, $\hat{\sigma}$)}
\end{align*} 
\vspace{-0.75cm}
\center{\rule{0.3\textwidth}{0.3pt}}
\begin{align*}
  (e, \hat{\rho}, \hat{\cont}) &\pdPath (e', \hat{\rho}', \hat{\cont}')\ \ \text{(via $\hat{r}$, $\hat{\sigma}$)}, \text{where}
  \\
  (e, \hat{\rho}, \hat{\cont}) &\pdPath (e'', \hat{\rho}'', \hat{\cont}'')\ \ \text{(via $\hat{r}$, $\hat{\sigma}$)}
  \\
  (e'', \hat{\rho}'', \hat{\sigma}, \hat{\cont}'') &\pdToSub (e', \hat{\rho}', \hat{\sigma}, \hat{\cont}')
\end{align*} 
\caption{Unbounded-stack paths}
\label{label:unbounded-stack-paths}
\end{figure}

Next, in figures~\ref{figure:finite-state-paths} and~\ref{label:unbounded-stack-paths} , we define paths to configurations.
For $(\pdPath)$ this is defined by a base case from a configuration to
itself and a recursive case that builds onto an existing path with a
step within $\hat{\xi}$.
This uses a variation of the step relation defined in figure~\ref{figure:sub-step-relations} that allows the output store of the step to be a sub-store of the store in $\hat{\xi}$.
The $(\fsPath)$ relation is similar but adds extra side conditions that ensure invariants used in our proof.

Then, in figure~\ref{figure:wf}, we define well-formedness.
This is a binary predicate with the first argument $\tilde{\xi}$ being the predecessor of the second argument $\tilde{\xi}'$,
which is the result we say is well-formed.
This predicate is defined in terms of several sub-properties.
The $\wf_{\tilde{\xi}}$ property requires $\tilde{\xi}$ be well-formed and the predecessor of $\tilde{\xi}'$.

The $\wf_{\sqsubseteq}$ property requires that $\tilde{\xi}'$ be component-wise greater than or equal to $\tilde{\xi}$.
The $\wf_\text{init}$ and $\wf_\text{halt}$ properties respectively require that the initial configuration be in $\tilde{\xi}'$ and
that the halt-continuation address $a_{\text{halt}}$ not have any continuations associated with it.
Finally, $\wf_{\tilde{r}}$, $\wf_{\tilde{\sigma}}$, and $\wf_{\tilde{\sigma}_\cont}$ ensure that everything in the $\tilde{r}$, $\tilde{\sigma}$, and $\tilde{\sigma}_\cont$ for $\tilde{\xi}'$ has a reason to be there.
For $\wf_{\tilde{r}}$, this means that every element of $\tilde{r}$ has some path leading to it.
For $\wf_{\tilde{\sigma}}$ and $\wf_{\tilde{\sigma}_\cont}$, this means that every value stored in $\tilde{\sigma}$ or $\tilde{\sigma}_\cont$ has some
step $(\fsTo)$ that put it there.
These are defined in terms of the variations of $(\fsToSub)$ in figure~\ref{figure:sub-step-relations} that have side conditions about the contents of the store.

\begin{figure}
\begin{align*}
  \wf(\tilde{\xi}, \tilde{\xi}') \defas \wf_{\tilde{\xi}}(\tilde{\xi}, \tilde{\xi}') \wedge \wf_{\sqsubseteq}(\tilde{\xi}, \tilde{\xi}') \wedge \wf_{\text{init}}(\tilde{\xi}, \tilde{\xi}') \wedge \wf_{\text{halt}}(\tilde{\xi}, \tilde{\xi}') \\ \wedge \wf_{\tilde{r}}(\tilde{\xi}, \tilde{\xi}') \wedge \wf_{\tilde{\sigma}}(\tilde{\xi}, \tilde{\xi}') \wedge \wf_{\tilde{\sigma}_\cont}(\tilde{\xi}, \tilde{\xi}')
\end{align*}
\center{\rule{0.3\textwidth}{0.3pt}}
\begin{align*}
  \wf_{\tilde{\xi}}(\tilde{\xi}, \tilde{\xi}') &\defas (\tilde{\xi} = \tilde{\xi}' = (\{(e_0, \varnothing, \tilde{a}_\text{halt})\}, \bot, \bot))
  \\
  &\vee\ 
  (\tilde{\xi} \fsToW \tilde{\xi}'
  \wedge
  \exists \tilde{\xi}''.\ \wf(\tilde{\xi}'', \tilde{\xi}))
\end{align*}
%
%
\begin{align*}
  \wf_{\sqsubseteq}((\tilde{r}, \tilde{\sigma}, \tilde{\sigma}_\cont), (\tilde{r}', \tilde{\sigma}', \tilde{\sigma}_\cont')) &\defas
  (\tilde{r} \subseteq \tilde{r}') \wedge (\tilde{\sigma} \sqsubseteq \tilde{\sigma}') \wedge (\tilde{\sigma}_\cont \sqsubseteq \tilde{\sigma}_\cont') 
\end{align*}
%
%
\begin{align*}
  \wf_\text{init}(\tilde{\xi}, (\tilde{r}'&, \tilde{\sigma}', \tilde{\sigma}_\cont')) \defas (e_0, \varnothing, \tilde{a}_\text{halt}) \in \tilde{r}'
\end{align*}
%
%
\begin{align*}
  \wf_\text{halt}(\tilde{\xi}, (\tilde{r}', \tilde{\sigma}', \tilde{\sigma}_\cont')) \defas \forall \tilde{\cont}.\ \tilde{\cont} \notin \tilde{\sigma}_\cont'(\tilde{a}_\text{halt})
\end{align*}
%
%
\begin{align*}
  \wf_{\tilde{r}}(&\tilde{\xi}, (\tilde{r}', \tilde{\sigma}', \tilde{\sigma}_\cont')) \defas
  \\
  &\forall (e, \tilde{\rho}, \tilde{a}_\cont) \in \tilde{r}'.\ \exists \tilde{\psi} \inpsi \tilde{a}_\cont\ \text{(via $\tilde{\sigma}_\cont'$)}.\  
  \\
  &\ \ \ \ \ \ \ \ ((e_0, \varnothing, \tilde{a}_\text{halt}), \epsilon) \fsPath ((e, \tilde{\rho}, \tilde{a}_\cont), \tilde{\psi})\ \ \text{(via $\tilde{\xi}$, $(\tilde{r}', \tilde{\sigma}', \tilde{\sigma}_\cont')$)}
\end{align*}
%
%
\begin{align*}
  \wf_{\tilde{\sigma}}(&(\tilde{r}, \tilde{\sigma}, \tilde{\sigma}_\cont), (\tilde{r}', \tilde{\sigma}', \tilde{\sigma}_\cont')) \defas
  \\
  &\forall \tilde{a}.\ \forall \widetilde{clo} \in \tilde{\sigma}'(\tilde{a}).\ 
  \exists (e, \tilde{\rho}, \tilde{a}_\cont) \in \tilde{r}.\ \exists (e', \tilde{\rho}', \tilde{a}_\cont') \in \tilde{r}'.\ 
  \\
  &\ \ \ \ \ \ \ \ (e, \tilde{\rho}, \tilde{\sigma}, \tilde{\sigma}_\cont, \tilde{a}_\cont) \fsToSub (e', \tilde{\rho}', \tilde{\sigma}', \tilde{\sigma}_\cont', \tilde{a}_\cont')\ \ \text{(with $\tilde{a}$, $\widetilde{clo}$)}
\end{align*}
%
%
\begin{align*}
  \wf_{\tilde{\sigma}_\cont}((\tilde{r}, \tilde{\sigma}, \tilde{\sigma}_\cont), (\tilde{r}', \tilde{\sigma}', \tilde{\sigma}_\cont')) \defas 
  \ \ \ \ \ \ \ \ \ \ \ \ \ \ \ \ \ \ \ \ \ \ \ \ \ \ \ \ \ \ \ \ \ \ \ \ \ \ \ \ \ \ 
  \ \ \ \ \ \ \ \ \ \ \ \ \ \ \ \ \ \ \ \ \ \ \ \ \ \ \ \ \ \ \ \ \ \ \ \ \ \ \ \ \ \ 
\end{align*}
\vspace{-0.5cm}
\begin{align*}
  &\forall \tilde{a}_\cont'.\ 
  \forall ((x, e, \tilde{\rho}_\cont), \tilde{a}_\cont) \in \tilde{\sigma}_\cont'(\tilde{a}_\cont').\ 
  \\
  &\ \ \ \ \ \ \exists e_\cont'.\ \exists \tilde{\rho}_\cont'.\ \tilde{a}_\cont'=(e_\cont', \tilde{\rho}_\cont')\wedge\mathit{Entry}(\tilde{a}_\cont') \in \tilde{r}'\ 
  \\
  &\ \ \ \ \ \ \ \ \ \ \ \wedge \exists f.\ \exists \aexpr.\ 
  (\letiform{x}{\appform{f}{\aexpr}}{e}, \tilde{\rho}_\cont, \tilde{a}_\cont) \in \tilde{r}
  \\
  &\ \ \ \ \ \ \ \ \ \ \ \ \ \ \ \ \ \wedge 
  (\letiform{x}{\appform{f}{\aexpr}}{e}, \tilde{\rho}_\cont, \tilde{\sigma}, \tilde{\sigma}_\cont, \tilde{a}_\cont) 
  \\
  &\ \ \ \ \ \ \ \ \ \ \ \ \ \ \ \ \ \ \ \ \ \ \ \fsToSub
  (e_\cont', \tilde{\rho}_\cont', \tilde{\sigma}', \tilde{\sigma}_\cont', \tilde{a}_\cont')\ \ 
  \text{(with $\tilde{a}_\cont'$, $((x, e, \tilde{\rho}_\cont), \tilde{a}_\cont)$)} 
\end{align*}
\center{\rule{0.3\textwidth}{0.3pt}}
\begin{align*}
  \var{Entry} :&\ \widetilde{\var{Addr}} \to \tilde{C}
  \\
  \var{Entry}(\tilde{a}_\text{halt}) \defas&\ (e_0, \varnothing, \tilde{a}_\text{halt})
  \\
  \var{Entry}((e_\cont, \tilde{\rho}_\cont)) \defas&\ (e_\cont, \tilde{\rho}_\cont, (e_\cont, \tilde{\rho}_\cont))
\end{align*}
\caption{Well-formedness properties}
\label{figure:wf}
\end{figure}

For $\wf_{\tilde{\sigma}_\cont}$, we define $\mathit{Entry}$, which maps a continuation address to the configuration that is the entry point for the function invocation 
that contains the configurations using that continuation address.

Finally, with the following assumption, we require that once allocation creates an address
it must always produce the same address for the same configuration even if the value or continuation
stores have changed.
\begin{assumption}[Allocation consistency] 
\label{assumption:alloc_consistency}
If $\wf(\tilde{\xi}, (\tilde{r}, \tilde{\sigma}, \tilde{\sigma}_\cont))$, and the state step $(e, \tilde{\rho}, \tilde{\sigma}, \tilde{\sigma}_\cont, \tilde{a}_\cont) \fsTo (e', \tilde{\rho}'[x \mapsto \tilde{a}], \tilde{\sigma}'', \tilde{\sigma}_\cont'', \tilde{a}_\cont')$ holds where $\tilde{a} = \widetilde{\var{alloc}}(x, (e, \tilde{\rho}, \tilde{\sigma}, \tilde{\sigma}_\cont, \tilde{a}_\cont))$ and there is a result step $(\tilde{r}, \tilde{\sigma}, \tilde{\sigma}_\cont) \fsToW (\tilde{r}', \tilde{\sigma}', \tilde{\sigma}_\cont')$, then the corresponding allocation for $e$, $\tilde{\rho}$, and $\tilde{a}_\cont$, but with $\tilde{\sigma}'$ and $\tilde{\sigma}_\cont'$, is the same:
\[ \widetilde{\var{alloc}}(x, (e, \tilde{\rho}, \tilde{\sigma}, \tilde{\sigma}_\cont, \tilde{a}_\cont)) = \widetilde{\var{alloc}}(x, (e, \tilde{\rho}, \tilde{\sigma}', \tilde{\sigma}_\cont', \tilde{a}_\cont)) \]
\end{assumption}

\begin{figure}
\begin{align*}
  (e, \hat{\rho}, \hat{\sigma}, \hat{\cont}) &\pdToSub (e', \hat{\rho}', \hat{\sigma}', \hat{\cont}'), \text{where}
  \\
  (e, \hat{\rho}, \hat{\sigma}, \hat{\cont}) &\pdTo (e', \hat{\rho}', \hat{\sigma}'', \hat{\cont}')
  \\
  \hat{\sigma}'' &\sqsubseteq \hat{\sigma}'
\end{align*} 
\begin{align*}
  (e, \tilde{\rho}, \tilde{\sigma}, \tilde{\sigma}_\cont, \tilde{a}_\cont) &\fsToSub (e', \tilde{\rho}', \tilde{\sigma}', \tilde{\sigma}_\cont', \tilde{a}_\cont'), \text{where}
  \\
  (e, \tilde{\rho}, \tilde{\sigma}'', \tilde{\sigma}_\cont'', \tilde{a}_\cont) &\fsTo (e', \tilde{\rho}', \tilde{\sigma}''', \tilde{\sigma}_\cont''', \tilde{a}_\cont')
  \\
  (\tilde{\sigma}'' \sqsubseteq \tilde{\sigma}) \wedge (\tilde{\sigma}_\cont'' \sqsubseteq \tilde{\sigma}_\cont) &\wedge 
  (\tilde{\sigma}''' \sqsubseteq \tilde{\sigma}') \wedge (\tilde{\sigma}_\cont''' \sqsubseteq \tilde{\sigma}_\cont')
\end{align*} 
\begin{align*}
  (e, \tilde{\rho}, \tilde{\sigma}, \tilde{\sigma}_\cont, \tilde{a}_\cont) &\fsToSub (e', \tilde{\rho}', \tilde{\sigma}', \tilde{\sigma}_\cont', \tilde{a}_\cont')
  \ \text{(with $\tilde{a}, \widetilde{clo}$)}, \text{where}
  \\
  (e, \tilde{\rho}, \tilde{\sigma}'', \tilde{\sigma}_\cont'', \tilde{a}_\cont) &\fsTo (e', \tilde{\rho}', \tilde{\sigma}''', \tilde{\sigma}_\cont''', \tilde{a}_\cont')
  \\
  (\tilde{\sigma}'' \sqsubseteq \tilde{\sigma}) \wedge (\tilde{\sigma}_\cont'' \sqsubseteq \tilde{\sigma}_\cont) &\wedge 
  (\tilde{\sigma}''' \sqsubseteq \tilde{\sigma}') \wedge (\tilde{\sigma}_\cont''' \sqsubseteq \tilde{\sigma}_\cont')
  \wedge \widetilde{clo} \in \tilde{\sigma}'''(\tilde{a})
\end{align*} 
\begin{align*}
  (e, \tilde{\rho}, \tilde{\sigma}, \tilde{\sigma}_\cont, \tilde{a}_\cont) &\fsToSub (e', \tilde{\rho}', \tilde{\sigma}', \tilde{\sigma}_\cont', \tilde{a}_\cont') 
  \ \text{(with $\tilde{\cont}, \tilde{a}''_\cont$)}, \text{where}
  \\
  (e, \tilde{\rho}, \tilde{\sigma}'', \tilde{\sigma}_\cont'', \tilde{a}_\cont) &\fsTo (e', \tilde{\rho}', \tilde{\sigma}''', \tilde{\sigma}_\cont''', \tilde{a}_\cont')
  \\
  (\tilde{\sigma}'' \sqsubseteq \tilde{\sigma}) \wedge (\tilde{\sigma}_\cont'' \sqsubseteq \tilde{\sigma}_\cont) &\wedge 
  (\tilde{\sigma}''' \sqsubseteq \tilde{\sigma}') \wedge (\tilde{\sigma}_\cont''' \sqsubseteq \tilde{\sigma}_\cont')
  \wedge \tilde{\cont} \in \tilde{\sigma}_\cont'''(\tilde{a}_\cont'')
\end{align*} 
\caption{Sub-step Relations}
\label{figure:sub-step-relations}
\end{figure}

\subsection{Lemmas and Theorems}

To start, lemma~\ref{lemma:iterated_wf} shows that iterated steps produce well-formed $\tilde{\xi}$. 

\begin{lemma}[Well-formedness of analysis results] 
\label{lemma:iterated_wf}
If $\tilde{\xi}'$ is the result of taking zero or more steps of $(\fsToW)$, starting from the initial result, $(\{(e_0, \varnothing, \tilde{a}_\text{halt})\}, \bot, \bot)$, then $\wf(\tilde{\xi},\tilde{\xi}')$ for some $\tilde{\xi}$.
\end{lemma}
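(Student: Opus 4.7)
The plan is to proceed by induction on the number of $(\fsToW)$ steps taken to reach $\tilde{\xi}'$.

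For the base case (zero steps), we have $\tilde{\xi}' = (\{(e_0, \varnothing, \tilde{a}_\text{halt})\}, \bot, \bot)$. I would witness well-formedness by taking $\tilde{\xi} = \tilde{\xi}'$, which immediately satisfies the first disjunct in $\wf_{\tilde{\xi}}$. All the other sub-properties are vacuous or trivial: $\wf_{\sqsubseteq}$ holds reflexively; $\wf_\text{init}$ holds by construction; $\wf_\text{halt}$ holds because $\bot$ maps $\tilde{a}_\text{halt}$ to the empty set; $\wf_{\tilde{r}}$ is witnessed by the empty implied stack $\epsilon$ (since $\epsilon \inpsi \tilde{a}_\text{halt}$ via $\bot$) and the reflexive base case of $\fsPath$; and $\wf_{\tilde{\sigma}}$ and $\wf_{\tilde{\sigma}_\cont}$ are vacuous because both stores are empty.

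For the inductive case, assume the claim holds up through $\tilde{\xi}'$ with witness $\tilde{\xi}$, and suppose $\tilde{\xi}' \fsToW \tilde{\xi}''$. I would then show $\wf(\tilde{\xi}', \tilde{\xi}'')$. The clause $\wf_{\tilde{\xi}}$ is immediate from the induction hypothesis and the assumed step. The clause $\wf_{\sqsubseteq}$ follows directly from the definition of $(\fsToW)$, since it computes $\tilde{r}''$, $\tilde{\sigma}''$, and $\tilde{\sigma}_\cont''$ as unions/joins that include $\tilde{r}'$, $\tilde{\sigma}'$, and $\tilde{\sigma}_\cont'$ respectively (the initial configuration is explicitly rejoined, giving $\wf_\text{init}$); and $\wf_\text{halt}$ follows because neither $\widetilde{\var{alloc}}_{\cont\ \!\scalebox{0.5}{\text{P4F}}}$ nor any intended allocator ever returns $\tilde{a}_\text{halt}$. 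For $\wf_{\tilde{\sigma}}$ and the non-Entry portion of $\wf_{\tilde{\sigma}_\cont}$, any newly deposited closure or frame can be directly witnessed by the $(\fsTo)$ transition that produced it during the $(\fsToW)$ step, which by construction originates from some $(e, \tilde{\rho}, \tilde{a}_\cont) \in \tilde{r}'$ and lands in some $(e', \tilde{\rho}', \tilde{a}_\cont') \in \tilde{r}''$; the sub-store side conditions of $(\fsToSub)$ are then satisfied by choosing the minimal witnesses from the pre- and post-stores.

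The substantive work lies in $\wf_{\tilde{r}}$. For each $(e, \tilde{\rho}, \tilde{a}_\cont) \in \tilde{r}''$, either it already lies in $\tilde{r}'$, in which case the induction hypothesis gives a $\tilde{\psi} \inpsi \tilde{a}_\cont$ via $\tilde{\sigma}_\cont'$ and a $\fsPath$ to it; because $\tilde{\sigma}_\cont' \sqsubseteq \tilde{\sigma}_\cont''$ and $\tilde{r}' \subseteq \tilde{r}''$, that same $\tilde{\psi}$ is still implied in the extended store and the same path is still a valid path in $\tilde{\xi}''$, so monotonicity of $(\inpsi)$ and $(\fsPath)$ under extension of the components carries the witness forward. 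Otherwise the configuration was freshly produced by a $(\fsTo)$ step from some $(e_p, \tilde{\rho}_p, \tilde{a}_{\cont,p}) \in \tilde{r}'$, in which case I extend the inductively-supplied path to $(e_p, \tilde{\rho}_p, \tilde{a}_{\cont,p})$ by one step, using the appropriate push or pop case of $\fsPath$; this requires exhibiting the implied stack at the new configuration, which for a call-site transition is $((y, e_p', \tilde{\rho}_p), \tilde{a}_{\cont,p}) : \tilde{\psi}_p$ and for a return is the tail of $\tilde{\psi}_p$ (guaranteed to exist by the precondition on the return rule that $\tilde{a}_\cont \neq \tilde{a}_\text{halt}$ uses via $\wf_\text{halt}$).

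The main obstacle is the Entry condition in $\wf_{\tilde{\sigma}_\cont}$: one must show that whenever a frame is stored at continuation address $\tilde{a}_\cont' = (e_\cont', \tilde{\rho}_\cont')$, the configuration $\mathit{Entry}(\tilde{a}_\cont') = (e_\cont', \tilde{\rho}_\cont', (e_\cont', \tilde{\rho}_\cont'))$ actually lives in $\tilde{r}''$. This is where the specific form of $\widetilde{\var{alloc}}_{\cont\ \!\scalebox{0.5}{\text{P4F}}}$ is crucial: the call-site rule of $(\fsTo)$ both deposits the frame at address $(e', \tilde{\rho}')$ and simultaneously produces the successor configuration $(e', \tilde{\rho}', (e', \tilde{\rho}'))$ in $\tilde{r}''$, so the entry invariant is maintained by the same transition that makes the deposit. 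Finally, the side conditions relating $\tilde{\sigma}', \tilde{\sigma}_\cont'$ and their primed successors in the $(\fsToSub)$ witnesses are discharged using Assumption~\ref{assumption:alloc_consistency}, which guarantees that the allocator's choice is stable under extension of the stores, so the same step exists with sub-stores drawn from both $\tilde{\xi}'$ and $\tilde{\xi}''$.
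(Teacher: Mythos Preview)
Your proposal is correct and follows the same inductive structure as the paper's proof, which merely says to induct over $(\fsToW)$ steps and defers all component-wise sublemmas with ``omitted for space''; you have in fact supplied the detail the paper elides. One small caution: your justification of $\wf_{\sqsubseteq}$ asserts that $\tilde{r}''$ is computed as a union that includes $\tilde{r}'$, but the definition of $(\fsToW)$ does not literally do this---the inclusion instead follows because $(\fsToW)$ is monotone and the iteration begins at the initial result, which equals $(\fsToW)$ applied to $(\varnothing,\bot,\bot)$, so the Kleene iterates form an ascending chain.
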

\begin{proof}
We induct over $(\fsToW)$ steps. In the base case, we can easily show that the initial result is well-formed. 
We can also show that for any $\tilde{\xi}' \fsToW \tilde{\xi}''$, if $\wf(\tilde{\xi},\tilde{\xi}')$ then $\wf(\tilde{\xi}',\tilde{\xi}'')$.
This is done using sublemmas for the components of well-formedness.
We omit these for space.
\end{proof}

Next, with lemma~\ref{lemma:stacks_have_paths}, we show that every configuration paired with one of its implied stacks has a path leading to it (i.e., the top edge of figure~\ref{figure:logical-chain}).

\begin{lemma}[Stacks have paths]
\label{lemma:stacks_have_paths}
If $(e, \tilde{\rho}, \tilde{a}_\cont) \in \tilde{r}$ such that $\wf(\tilde{\xi}, (\tilde{r}, \tilde{\sigma}, \tilde{\sigma}_\cont))$, then:
\begin{align*}
  &\forall \tilde{\psi} \inpsi \tilde{a}_\cont\ \ \text{(via $\tilde{\sigma}_\cont$)}.\\
  &\ \ \ \ ((e_0, \varnothing, \tilde{a}_\text{halt}), \epsilon) \fsPath ((e, \tilde{\rho}, \tilde{a}_\cont), \tilde{\psi})\ \ \text{(via $\tilde{\xi}$, $(\tilde{r}, \tilde{\sigma}, \tilde{\sigma}_\cont)$)}
\end{align*}
\end{lemma}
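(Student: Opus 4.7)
The plan is to prove the claim by strong induction on the number of $(\fsToW)$ steps used to construct $\tilde{\xi}$ from the initial state, with an inner induction on the length of $\tilde{\psi}$. The key tools are the well-formedness clauses $\wf_{\tilde{r}}$, $\wf_{\tilde{\sigma}_\cont}$, $\wf_\text{halt}$, and $\wf_\text{init}$, together with the call-extension and return-extension rules of $(\fsPath)$. The strengthening over $\wf_{\tilde{r}}$ is that the conclusion must hold for \emph{every} implied stack, not merely some.

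For the base case where $\tilde{\xi}$ is the initial state, the only reachable configuration is $(e_0, \varnothing, \tilde{a}_\text{halt})$ and the only stack implied by $\tilde{a}_\text{halt}$ is $\epsilon$, so the base rule of $(\fsPath)$ applies directly. For the inductive step, fix $(e, \tilde{\rho}, \tilde{a}_\cont) \in \tilde{r}$ and $\tilde{\psi} \inpsi \tilde{a}_\cont$. If $\tilde{\psi} = \epsilon$, then $\tilde{a}_\cont = \tilde{a}_\text{halt}$, and a small sublemma using $\wf_\text{halt}$ and $\wf_\text{init}$ forces $(e, \tilde{\rho}) = (e_0, \varnothing)$, so the base rule again suffices. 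Otherwise $\tilde{\psi} = (\tilde{\phi}, \tilde{a}'_\cont) : \tilde{\psi}'$ with $(\tilde{\phi}, \tilde{a}'_\cont) \in \tilde{\sigma}_\cont(\tilde{a}_\cont)$ and $\tilde{\psi}' \inpsi \tilde{a}'_\cont$; I apply $\wf_{\tilde{\sigma}_\cont}$ to recover a caller configuration $(\letiform{x}{\appform{f}{\aexpr}}{e_\text{ret}}, \tilde{\rho}_\cont, \tilde{a}'_\cont)$ in the predecessor's $\tilde{r}$ whose call step placed $(\tilde{\phi}, \tilde{a}'_\cont)$ at $\tilde{a}_\cont$ and landed at $\var{Entry}(\tilde{a}_\cont) \in \tilde{r}$. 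The inner hypothesis on the strictly shorter $\tilde{\psi}'$, applied at the caller, yields a path from the initial configuration to that caller; the third rule of $(\fsPath)$ extends this across the call step to produce a path ending at $\var{Entry}(\tilde{a}_\cont)$ paired with $\tilde{\psi}$. Finally, I splice on an intraprocedural suffix from $\var{Entry}(\tilde{a}_\cont)$ to $(e, \tilde{\rho}, \tilde{a}_\cont)$: because the P4F continuation address $(e, \tilde{\rho})$ identifies the entry point itself, every step on the way to $(e, \tilde{\rho}, \tilde{a}_\cont)$ either preserves $\tilde{a}_\cont$ or forms a matched call/return pair through a subcall whose round-trip restores $\tilde{a}_\cont$, letting us extend the path via the second and third $(\fsPath)$-rules.

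The hardest part will be carefully managing the $\wf_{\sqsubseteq}$ side conditions built into the $(\fsPath)$ rules: intermediate configurations must be witnessed in the predecessor's $\tilde{r}$ while the endpoint must lie in the current $\tilde{r}$, so the two inductions have to be coordinated with the order in which configurations and continuation-store entries were introduced. Assumption~\ref{assumption:alloc_consistency} is crucial here, guaranteeing that continuation addresses computed by $\widetilde{\var{alloc}}_\cont$ during earlier steps agree with those computed under the current stores, which is what lets the $(\fsToSub)$ steps demanded by the $(\fsPath)$ rules reference the current $\tilde{\sigma}$ and $\tilde{\sigma}_\cont$ without breaking coherence with the original steps that introduced the relevant entries. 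Once this bookkeeping is discharged, composing the call-extension step with the intraprocedural suffix produces the path required by the conclusion.
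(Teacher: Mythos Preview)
Your overall strategy---induct on the length of $\tilde{\psi}$, use $\wf_{\tilde{\sigma}_\cont}$ to recover a caller, apply the hypothesis to the shorter stack at the caller, then splice on an intraprocedural suffix from $\var{Entry}(\tilde{a}_\cont)$---matches the paper's.  But two things deserve correction.

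First, your base case for the inner induction is wrong.  You claim that if $\tilde{\psi}=\epsilon$ (hence $\tilde{a}_\cont=\tilde{a}_\text{halt}$), then $\wf_\text{halt}$ and $\wf_\text{init}$ force $(e,\tilde{\rho})=(e_0,\varnothing)$.  They do not.  A return from a top-level call produces a configuration of the form $(e_\text{body},\tilde{\rho}',\tilde{a}_\text{halt})$ with $(e_\text{body},\tilde{\rho}')\neq(e_0,\varnothing)$; nothing in well-formedness rules this out, and it happens in any program whose outermost expression is a \texttt{let}.  The paper handles this case differently: it first invokes $\wf_{\tilde{r}}$ to obtain a path to $(e,\tilde{\rho},\tilde{a}_\cont)$ for \emph{some} implied stack $\tilde{\psi}'$, and then observes that when $\tilde{a}_\cont=\tilde{a}_\text{halt}$ the \emph{only} implied stack is $\epsilon$, so $\tilde{\psi}'=\tilde{\psi}$ and the $\wf_{\tilde{r}}$ path already suffices.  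That same $\wf_{\tilde{r}}$ path is also what the paper mines for the intraprocedural suffix in the inductive case (its ``another induction'' replaces the base stack $\tilde{\psi}'$ of that suffix by $\tilde{\psi}$), so pulling it up front serves double duty.

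Second, the outer strong induction on the number of $(\fsToW)$ steps is unnecessary and buys you nothing but bookkeeping.  The paper proves the lemma by a single induction on $\tilde{\psi}$, working entirely inside the fixed pair $\tilde{\xi},(\tilde{r},\tilde{\sigma},\tilde{\sigma}_\cont)$: when $\wf_{\tilde{\sigma}_\cont}$ places the caller in the predecessor's $\tilde{r}$, $\wf_{\sqsubseteq}$ immediately puts it in the current $\tilde{r}$ as well, so the inductive hypothesis on the shorter $\tilde{\psi}''$ applies at the same well-formedness witness.  Your outer induction would instead hand you a path ``via $\tilde{\xi}'',\tilde{\xi}$'' that you must then lift to ``via $\tilde{\xi},(\tilde{r},\tilde{\sigma},\tilde{\sigma}_\cont)$''---extra monotonicity work you correctly flag as ``the hardest part,'' but which disappears if you drop the outer induction.
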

\begin{proof}
By $\wf_{\tilde{r}}(\tilde{\xi}, (\tilde{r}, \tilde{\sigma}, \tilde{\sigma}_\cont))$, there is some 
$\tilde{\psi}' \inpsi \tilde{a}_\cont\ \ \text{(via $\tilde{\sigma}_\cont$)}$ for which 
$((e_0, \varnothing, \tilde{a}_\text{halt}), \epsilon) \fsPath ((e, \tilde{\rho}, \tilde{a}_\cont), \tilde{\psi}')\ \ \text{(via $\tilde{\xi}$, $(\tilde{r}, \tilde{\sigma}, \tilde{\sigma}_\cont)$)}$.
However, this path uses $\tilde{\psi}'$ instead of our desired $\tilde{\psi}$.
Thus we induct over $\tilde{\psi}$.
If $\tilde{\psi}$ is the empty list, $\epsilon$, then $\tilde{a}_\cont$ must be $\tilde{a}_\text{halt}$ and thus $\epsilon$ is the only $\tilde{\psi}$ for which 
$\tilde{\psi} \inpsi \tilde{a}_\cont\ \ \text{(via $\tilde{\sigma}_\cont$)}$.
So $\tilde{\psi}' = \tilde{\psi} = \epsilon$, and the path obtained from $\wf_{\tilde{r}}(\tilde{\xi}, (\tilde{r}, \tilde{\sigma}, \tilde{\sigma}_\cont))$ equals our desired conclusion.

If $\tilde{\psi}$ is $((x, e_\cont, \tilde{\rho}_\cont), \tilde{a}_\cont'):\tilde{\psi}''$ for some $x$, $e_\cont$, $\tilde{\rho}_\cont$, $\tilde{a}_\cont'$, 
$\tilde{\psi}''$, then there is a path for $\tilde{\psi}'$ from $\mathit{Entry}(\tilde{a}_\cont)$ to $(e, \tilde{\rho}, \tilde{a}_\cont)$.
By another induction there is a similar path for $\tilde{\psi}$:
\begin{align*}
  (\mathit{Entry}(\tilde{a}_\cont), \tilde{\psi}) \fsPath ((e, \tilde{\rho}, \tilde{a}_\cont), \tilde{\psi})\ \ \text{(via $\tilde{\xi}$, $(\tilde{r}, \tilde{\sigma}, \tilde{\sigma}_\cont)$)}
\end{align*}
By $\wf_{\tilde{r}}(\tilde{\xi}, (\tilde{r}, \tilde{\sigma}, \tilde{\sigma}_\cont))$, there exist $f$ and $\aexpr$ for a call site \linebreak
$(\letiform{x}{\appform{f}{\aexpr}}{e_\cont}, \tilde{\rho}_\cont, \tilde{a}_\cont') \in \tilde{r}$ and a step from that call site
to $\mathit{Entry}(\tilde{a}_\cont)$:
\begin{align*}
  (\letiform{x}{\appform{f}{\aexpr}}{e_\cont}, \tilde{\rho}_\cont, \tilde{\sigma}, \tilde{\sigma}_\cont, \tilde{a}_\cont')
  \fsToSub
  (e', \tilde{\rho}', \tilde{\sigma}, \tilde{\sigma}_\cont, \tilde{a}_\cont)
\end{align*}
\vspace{-0.5cm}
\begin{align*}
  \text{where}\ \ \ \ \ \ (e', \tilde{\rho}', \tilde{a}_\cont) = \mathit{Entry}(\tilde{a}_\cont)
\end{align*}
By the induction hypothesis, we have a path for $\tilde{\psi}''$ from $(e_0, \varnothing, \tilde{a}_\text{halt})$ to the call site:
\begin{align*}
  ((e_0, \varnothing, \tilde{a}_\text{halt}), \epsilon)
  \fsPath
  ((\letiform{x}{\appform{f}{\aexpr}}{e_\cont}, \tilde{\rho}_\cont, \tilde{a}_\cont'), \tilde{\psi}'')
\end{align*}
We now have a path from $((e_0, \varnothing, \tilde{a}_\text{halt}), \epsilon)$ to the call site with $\tilde{\psi}''$,
a step from the call site to $\mathit{Entry}(\tilde{a}_\cont)$ that pushes $(x, e_\cont, \tilde{\rho}_\cont)$ onto the stack,
and a path from $\mathit{Entry}(\tilde{a}_\cont)$ to $(e, \tilde{\rho}, \tilde{a}_\cont)$ with $\tilde{\psi}$.
From these we can then construct the path desired in our conclusion.
\end{proof}

Next, with lemma~\ref{lemma:path_til_hat}, we show that every path in a well-formed $\tilde{\xi}$ has a corresponding path in any $\hat{\xi}$ that is at at fixed point (i.e., the left edge of figure~\ref{figure:logical-chain}).

\begin{lemma}[Path conversion] 
\label{lemma:path_til_hat}
If $(e, \tilde{\rho}, \tilde{a}_\cont) \in \tilde{r}$ such that \linebreak
$\wf(\tilde{\xi}, (\tilde{r}, \tilde{\sigma}, \tilde{\sigma}_\cont))$
and $(\hat{r}, \hat{\sigma}) \pdToW (\hat{r}, \hat{\sigma})$, then:
\begin{align*}
  &((e_0, \varnothing, \tilde{a}_\text{halt}), \epsilon) \fsPath ((e, \tilde{\rho}, \tilde{a}_\cont), \tilde{\psi})\ \ \text{(via $\tilde{\xi}$, $(\tilde{r}, \tilde{\sigma}, \tilde{\sigma}_\cont)$)}
  \\
  &\ \ \ \ \ \ \ \ \implies (e_0, \varnothing, \epsilon) \pdPath \hC((e, \tilde{\rho}, \tilde{a}_\cont), \tilde{\psi})\ \ \text{(via $\hat{r}$, $\hat{\sigma}$)}
\end{align*}
\end{lemma}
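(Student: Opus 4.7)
The plan is to proceed by induction on the derivation of the finite-state path $((e_0, \varnothing, \tilde{a}_\text{halt}), \epsilon) \fsPath ((e, \tilde{\rho}, \tilde{a}_\cont), \tilde{\psi})$ using the three rules for $(\fsPath)$ given in figure~\ref{figure:finite-state-paths}. The base case is the reflexive rule, in which the target of the path equals its source; the desired conclusion is then just the reflexive instance of $(\pdPath)$ applied to $\hC((e_0, \varnothing, \tilde{a}_\text{halt}), \epsilon)$, which unfolds to $(e_0, \varnothing, \epsilon)$ because $\hEnv(\varnothing) = \varnothing$ and $\hKont(\epsilon) = \epsilon$.

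For each inductive case, the hypothesis supplies an unbounded-stack path to the $\hC$-image of the sub-path's endpoint, and I must exhibit a single matching $(\pdToSub)$ step that extends it. In the push case, the finite-state extension is a call-site step out of a \texttt{let}-bound application; by Assumptions~\ref{simulation-alloc} and~\ref{assumption:prec_alloc}, $\widetilde{\var{alloc}}$ and $\sa{alloc}$ produce corresponding addresses, so the unbounded-stack call step yields the target environment $\hEnv(\tilde{\rho}')$ and stack $\hFrame((x, e'', \tilde{\rho}'')) : \hKont(\tilde{\psi}')$, which is exactly $\hC$ applied to the new finite-state endpoint. In the pop case, the side condition $((x, e', \tilde{\rho}_\cont), \tilde{a}_\cont') \in \tilde{\sigma}_\cont(\tilde{a}_\cont'')$ forces the implicit stack prepended by the rule to have $((x, e', \tilde{\rho}_\cont), \tilde{a}_\cont')$ on top; by the definition of $\hKont$, the head frame of the corresponding unbounded stack is $\hFrame((x, e', \tilde{\rho}_\cont))$, so the unbounded-stack return rule fires directly, with the extended environment again hitting a corresponding address by the allocation assumptions.

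The main obstacle will be ensuring each extension is actually enabled in the unbounded-stack machine, because $(\pdToSub)$ requires $\hat{\mathcal{A}}$ on $\hat{\sigma}$ to yield a closure that matches the one $\tilde{\mathcal{A}}$ returns from $\tilde{\sigma}$ in the finite-state step. The hypothesis $(\hat{r}, \hat{\sigma}) \pdToW (\hat{r}, \hat{\sigma})$ is what rescues this: every $\pdTo$ transition out of a configuration already in $\hat{r}$ has a successor whose store is dominated by the fixed-point $\hat{\sigma}$, so intermediate steps cannot fail for want of a store value. To make this precise, the induction will carry a joint invariant stating that the unbounded-stack configuration at each path endpoint lies in $\hat{r}$ (effectively proving lemma~\ref{lemma:path_end_point} in tandem), together with a closure-equivalence invariant linking $\hat{\sigma}$ to $\tilde{\sigma}$ at the relevant addresses. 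Verifying that $\tilde{\mathcal{A}}(f, \tilde{\rho}, \tilde{\sigma})$ and $\hat{\mathcal{A}}(f, \hEnv(\tilde{\rho}), \hat{\sigma})$ yield $\simClo$-related closures so that the same lambda drives both transitions is where the real technical content of the proof lives.
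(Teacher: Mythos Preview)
Your approach---induction on the derivation of the finite-state path with three cases (reflexive, return/pop, call/push)---is exactly the paper's. The paper's proof is terse: in each inductive case it obtains the shorter unbounded-stack path from the induction hypothesis and then simply asserts ``We can then show that $(\hat{r}, \hat{\sigma})$ contains a step corresponding to the step in $(\tilde{r}, \tilde{\sigma}, \tilde{\sigma}_\cont)$'' without spelling out why the required closure is present in $\hat{\sigma}$. You correctly flag this as the nontrivial obligation.

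Two minor remarks on your plan. First, you propose to carry endpoint membership in $\hat{r}$ as a joint invariant; the paper instead factors this out as the separate lemma~\ref{lemma:path_end_point} and does not use it inside the present induction. Either organization works, and yours is marginally more economical, but it is not needed to make the $(\pdToSub)$ step fire---membership in $\hat{r}$ plus the fixed-point property gives you that the \emph{output} store is dominated by $\hat{\sigma}$, not that the particular closure you want is in $\hat{\sigma}$. Second, and relatedly, your ``closure-equivalence invariant linking $\hat{\sigma}$ to $\tilde{\sigma}$ at the relevant addresses'' is not really an invariant to maintain along the path: both stores are fixed throughout the induction, so there is nothing to thread. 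What you actually need is the global fact $\hat{\sigma} \simStore \tilde{\sigma}$ (with $\tilde{\sigma}$ taken from the predecessor $\tilde{\xi}$ in the well-formedness witness), which immediately gives $\hat{\mathcal{A}}(f, \hEnv(\tilde{\rho}), \hat{\sigma}) \simD \tilde{\mathcal{A}}(f, \tilde{\rho}, \tilde{\sigma})$ at every call step. The lemma statement as printed omits this hypothesis, but it is available at the only call site (lemma~\ref{lemma:step_prec_R}) via the ambient assumption $(\hat{r}, \hat{\sigma}) \simXi (\tilde{r}, \tilde{\sigma}, \tilde{\sigma}_\cont)$; treat it as an implicit premise you discharge once, not as something re-established at each inductive step.
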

\begin{proof}
By induction over the finite-state path. We have three cases.

\textbf{Case:} The path is empty. Trivial.

\textbf{Case:} The last step of the path is a return.
For some $\aexpr$, $\tilde{\rho}'$, $\tilde{a}_\cont$, $\tilde{a}_\cont'$, and $\tilde{\rho}''$, there is a step
$(\aexpr, \tilde{\rho}', \tilde{\sigma}, \tilde{\sigma}_\cont, \tilde{a}_\cont') \fsToSub (e, \tilde{\rho}, \tilde{\sigma}, \tilde{\sigma}_\cont, \tilde{a}_\cont)$ and, by the induction hypothesis, a path:
\begin{align*}
  (e_0, \varnothing, \epsilon) \pdPath \hC((\aexpr, \tilde{\rho}', \tilde{a}_\cont'), ((x, e, \tilde{\rho}''), \tilde{a}_\cont):\tilde{\psi})\ \ \text{(via $\hat{r}$, $\hat{\sigma}$)}
\end{align*}
\vspace{-0.5cm}
\begin{align*}
  \text{where}\ \ \ \ \ \ \tilde{\rho} = \tilde{\rho}''[ x \mapsto \widetilde{alloc}(x, (\aexpr, \tilde{\rho}', \tilde{\sigma}, \tilde{\sigma}_\cont, \tilde{a}_\cont')) ]
\end{align*}
We can then show that $(\hat{r}, \hat{\sigma})$ contains a step corresponding to the step in $(\tilde{r}, \tilde{\sigma}, \tilde{\sigma}_\cont)$:
\begin{align*}
  (\aexpr, \hEnv(\tilde{\rho}'), \hat{\sigma}, (x, e, \hEnv(\tilde{\rho}'')):\hKont(\tilde{\psi}))
  \ \ \ \ \ \ \ \ \ \ \ \ &\\
  \fsToSub
  (e, \hEnv(\tilde{\rho}), \hat{\sigma}, \hKont(\tilde{\psi}))&
\end{align*}
Combining this with the path from the induction hypothesis, we can then construct the path in our conclusion.

\textbf{Case:} The last step of the path is a call.
For some $x$, $y$, $f$, $\aexpr$, $e'$, $\tilde{\rho}'$, $\tilde{\rho}_\lambda$, $\tilde{a}_\cont$, $\tilde{\psi}'$, we have
$(y, e, \tilde{\rho}_\lambda) \in \tilde{\mathcal{A}}(f, \tilde{\rho}', \tilde{\sigma})$, and a step:
\vspace{-0.5em}
\begin{align*}
  (\letiform{x}{\appform{f}{\aexpr}}{e'}, \tilde{\rho}', \tilde{\sigma}, \tilde{\sigma}_\cont, \tilde{a}_\cont') \fsToSub (e, \tilde{\rho}, \tilde{\sigma}, \tilde{\sigma}_\cont, \tilde{a}_\cont),\ \text{where}
\end{align*}
\vspace{-0.5cm}
\begin{align*}
  \tilde{\rho} = \tilde{\rho}_\lambda[ x \mapsto \widetilde{alloc}(x, (\letiform{x}{\appform{f}{\aexpr}}{e'}, \tilde{\rho}', \tilde{\sigma}, \tilde{\sigma}_\cont, \tilde{a}_\cont')) ]
\end{align*}
and, by the induction hypothesis, a path:
\begin{align*}
  (e_0, \varnothing, \epsilon) \pdPath \hC((\letiform{x}{\appform{f}{\aexpr}}{e'}, \tilde{\rho}', \tilde{a}_\cont'), \tilde{\psi}')\ \ \text{(via $\hat{r}$, $\hat{\sigma}$)}
\end{align*}
We can then show that $(\hat{r}, \hat{\sigma})$ contains a step corresponding to the step in $(\tilde{r}, \tilde{\sigma}, \tilde{\sigma}_\cont)$:
\begin{align*}
  (\letiform{x}{\appform{f}{\aexpr}}{e'}, \hEnv(\tilde{\rho}'), \hat{\sigma}, \hKont(\tilde{\psi}'))
  \ \ \ \ \ \ \ \ \ \ \ \ &\\ 
  \fsToSub
  (e, \hEnv(\tilde{\rho}), \hat{\sigma}, (x, e', \hEnv(\tilde{\rho}')):\hKont(\tilde{\psi}'))&
\end{align*}
Combining this with the path from the induction hypothesis, we can then construct the path in our conclusion.
\end{proof}

Then, with lemma~\ref{lemma:path_end_point}, we show that the endpoint of any path in $\hat{\xi}$ is in $\hat{\xi}$ (i.e., the bottom edge of figure~\ref{figure:logical-chain}).

\begin{lemma}[Path endpoint]
\label{lemma:path_end_point}
If $(\hat{r}, \hat{\sigma}) \pdTo (\hat{r}, \hat{\sigma})$, then for any path
$\hat{c}_0 \pdPath (e, \hat{\rho}, \hat{\cont})\ \ \text{(via $\hat{r}$, $\hat{\sigma}$)}$,
we have: $(e, \hat{\rho}, \hat{\cont}) \in \hat{r}$.
\end{lemma}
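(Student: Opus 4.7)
The plan is to induct on the length of the path $\hat{c}_0 \pdPath (e, \hat{\rho}, \hat{\cont})$, reading off from the definition in figure~\ref{label:unbounded-stack-paths}. This is straightforward because the only data of a path is a sequence of $\pdToSub$-steps starting from the initial configuration, and the fixed-point hypothesis lets us convert each such step into a witnessing $\pdToW$-step whose target lands back in $\hat{r}$.

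For the base case, the empty path has $(e, \hat{\rho}, \hat{\cont}) = \hat{c}_0 = (e_0, \varnothing, \epsilon)$. The widened transfer function $\pdToW$ (as defined in section~\ref{sec:perfect}) always adds $\hat{\mathcal{I}}(e_0) = (e_0, \varnothing, \bot, \epsilon)$ to the collecting set $\hat{s}$, so the corresponding configuration $(e_0, \varnothing, \epsilon)$ must appear in the configuration set of any fixed point. Hence $(e_0, \varnothing, \epsilon) \in \hat{r}$.

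For the inductive step, suppose the last edge of the path is a $\pdToSub$-step $(e'', \hat{\rho}'', \hat{\sigma}, \hat{\cont}'') \pdToSub (e, \hat{\rho}, \hat{\sigma}, \hat{\cont})$, and assume by induction hypothesis that $(e'', \hat{\rho}'', \hat{\cont}'') \in \hat{r}$. Unfolding $\pdToSub$ from figure~\ref{figure:sub-step-relations}, there exists some $\hat{\sigma}''' \sqsubseteq \hat{\sigma}$ with a genuine unbounded-stack step $(e'', \hat{\rho}'', \hat{\sigma}, \hat{\cont}'') \pdTo (e, \hat{\rho}, \hat{\sigma}''', \hat{\cont})$. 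Applying $\pdToW$ to $(\hat{r}, \hat{\sigma})$, this is precisely one of the steps that contributes to the collecting set $\hat{s}$, because the predecessor configuration lies in $\hat{r}$ and the input store used by $\pdTo$ agrees with the global $\hat{\sigma}$. Thus the successor configuration $(e, \hat{\rho}, \hat{\cont})$ appears in the updated configuration set $\hat{r}'$. Since $(\hat{r}, \hat{\sigma})$ is a $\pdToW$-fixed point, $\hat{r}' = \hat{r}$, and we conclude $(e, \hat{\rho}, \hat{\cont}) \in \hat{r}$.

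The only point that needs care — rather than a real obstacle — is the mismatch between $\pdToSub$ and $\pdTo$: the former allows the output store on the path to be a super-store of what the step actually produces, whereas $\pdToW$ collects successors using the store produced by the genuine $\pdTo$. This mismatch is harmless at the configuration level, however, because both relations share the same input store and produce the same expression, environment, and stack; the store slack only affects what gets joined into the global $\hat{\sigma}$, which the fixed-point assumption already absorbs.
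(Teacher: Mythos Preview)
Your proof is correct and follows the same approach as the paper, which records the proof simply as ``Trivial. By induction.'' You have spelled out the induction on the structure of the $\pdPath$ derivation that the paper leaves implicit, correctly using the fixed-point hypothesis to absorb each $\pdToSub$-step back into $\hat{r}$.
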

\begin{proof}
Trivial. By induction.
\end{proof}

Finally, with lemmas~\ref{lemma:step_prec_Store} and \ref{lemma:step_prec_R}, we show that precision is preserved by the step relation $\fsTo$
(i.e., the right edge of figure~\ref{figure:logical-chain}).
Then in theorem~\ref{theorem:step_prec_xi}, we show that these are all precise, which is ultimately what we want to prove. 

\begin{lemma}[Preservation of precision for value stores] 
\label{lemma:step_prec_Store}
If $(\hat{r}, \hat{\sigma}) \pdToW (\hat{r}, \hat{\sigma})$, $\wf(\tilde{\xi}, (\tilde{r}, \tilde{\sigma}, \tilde{\sigma}_\cont))$,
$(\tilde{r}, \tilde{\sigma}, \tilde{\sigma}_\cont) \fsToW (\tilde{r}', \tilde{\sigma}', \tilde{\sigma}_\cont')$, and
$(\hat{r}, \hat{\sigma}) \simXi (\tilde{r}, \tilde{\sigma}, \tilde{\sigma}_\cont)$, then $\hat{\sigma} \simStore \tilde{\sigma}'$.
\end{lemma}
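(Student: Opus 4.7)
The plan is to prove $\hat{\sigma} \simStore \tilde{\sigma}'$ pointwise: for each $\tilde{a}$ and each $\widetilde{clo} \in \tilde{\sigma}'(\tilde{a})$, exhibit a corresponding closure $\widehat{clo}$ at the corresponding address $\hAddr(\tilde{a})$ in $\hat{\sigma}$, related by $\simClo$. Since $\hat{\xi} = (\hat{r}, \hat{\sigma})$ is already at a fixed point of $\pdToW$, it suffices to find any unbounded-stack step out of some configuration in $\hat{r}$ that would contribute this closure; the fixed-point hypothesis then guarantees it is already in $\hat{\sigma}$.

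First I would pick arbitrary $\tilde{a}$ and $\widetilde{clo} \in \tilde{\sigma}'(\tilde{a})$ and invoke $\wf_{\tilde{\sigma}}((\tilde{r}, \tilde{\sigma}, \tilde{\sigma}_\cont), (\tilde{r}', \tilde{\sigma}', \tilde{\sigma}_\cont'))$, which is available because $\wf(\tilde{\xi}, (\tilde{r}, \tilde{\sigma}, \tilde{\sigma}_\cont))$ together with the given step $\fsToW$ extends well-formedness (by Lemma~\ref{lemma:iterated_wf}). This gives source and target configurations $(e, \tilde{\rho}, \tilde{a}_\cont) \in \tilde{r}$ and $(e', \tilde{\rho}', \tilde{a}_\cont') \in \tilde{r}'$ and a witnessing sub-step $(e, \tilde{\rho}, \tilde{\sigma}, \tilde{\sigma}_\cont, \tilde{a}_\cont) \fsToSub (e', \tilde{\rho}', \tilde{\sigma}', \tilde{\sigma}_\cont', \tilde{a}_\cont')$ that actually places $\widetilde{clo}$ at $\tilde{a}$. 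Next, since $(e, \tilde{\rho}, \tilde{a}_\cont) \in \tilde{r}$, I pick any implied stack $\tilde{\psi} \inpsi \tilde{a}_\cont\ \text{(via $\tilde{\sigma}_\cont$)}$ (one exists because well-formedness traces a path to the configuration). By Lemma~\ref{lemma:stacks_have_paths}, there is a finite-state path from the initial configuration to $((e, \tilde{\rho}, \tilde{a}_\cont), \tilde{\psi})$; by Lemma~\ref{lemma:path_til_hat}, this path converts to an unbounded-stack path in $\hat{\xi}$; by Lemma~\ref{lemma:path_end_point}, its endpoint $(e, \hEnv(\tilde{\rho}), \hKont(\tilde{\psi}))$ lies in $\hat{r}$.

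Now I would replay the witnessing step in the unbounded-stack machine starting from this configuration, using $\hat{\sigma}$ as the store. Case-splitting on whether the step is a call or a return, the target expression, target environment, and allocated address in the two machines correspond: the atomic evaluator $\hat{\mathcal{A}}$ yields a corresponding closure (by $\simClo$ propagated through the simulation of $\tilde{\sigma}$ by $\hat{\sigma}$ embedded in $\simXi$); Assumption~\ref{assumption:prec_alloc} (allocation equivalence) provides $\aalloc(x, \cdot) \equiv_{Addr} \widetilde{alloc}(x, \cdot)$ on the two state representations; and Assumption~\ref{assumption:alloc_consistency} is needed to reconcile the sub-step's use of a sub-store $\tilde{\sigma}'' \sqsubseteq \tilde{\sigma}$ with the actual $\tilde{\sigma}$, so that the address matches what the step would allocate relative to $\tilde{\sigma}$ itself. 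The resulting unbounded-stack step thus extends its output store with $\hClo(\widetilde{clo})$ at $\hAddr(\tilde{a})$. Because $(\hat{r}, \hat{\sigma})$ is a fixed point of $\pdToW$, this extension must already be subsumed, so $\hClo(\widetilde{clo}) \in \hat{\sigma}(\hAddr(\tilde{a}))$, yielding $\hat{\sigma} \simStore \tilde{\sigma}'$.

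The main obstacle, and the reason the two allocation assumptions are stated in exactly the forms they are, will be lining up the addresses and closures produced by the sub-step with those actually present in $\hat{\sigma}$. The sub-step relation is defined over arbitrary sub-stores $\tilde{\sigma}'', \tilde{\sigma}''' \sqsubseteq \tilde{\sigma}, \tilde{\sigma}'$, so the allocator could in principle see different inputs than the ones recorded in the well-formedness witness. Assumption~\ref{assumption:alloc_consistency} neutralizes this by forcing the allocator's output to depend only on the configuration (modulo store growth), and Assumption~\ref{assumption:prec_alloc} then transfers this stable address across the simulation. Once these are deployed, the rest reduces to the routine observation that $\hat{\mathcal{A}}$ and $\tilde{\mathcal{A}}$ produce $\simD$-related outputs on $\simEnv$, $\simStore$-related inputs, so the specific $\widetilde{clo}$ extracted from the source state in $\tilde{\xi}$ lifts to a $\simClo$-equivalent $\widehat{clo}$ available in $\hat{\xi}$.
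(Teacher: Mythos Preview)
The paper omits this proof entirely (``Omitted for space''), so there is nothing to compare against directly; your proposal follows the natural route suggested by the paper's framework and mirrors the logical chain it spells out for Lemma~\ref{lemma:step_prec_R} (stacks-have-paths $\Rightarrow$ path-conversion $\Rightarrow$ path-endpoint, then replay the step and appeal to the fixed point).

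One small point to tighten: when you say you ``pick any implied stack $\tilde{\psi} \inpsi \tilde{a}_\cont$'', that is fine for the call case, but in the return case the witnessing sub-step pops a specific continuation $((x,e',\tilde{\rho}_\cont),\tilde{a}_\cont') \in \tilde{\sigma}_\cont(\tilde{a}_\cont)$, and the replayed unbounded-stack step needs $\hKont(\tilde{\psi})$ to have the matching frame on top. So you must choose $\tilde{\psi} = ((x,e',\tilde{\rho}_\cont),\tilde{a}_\cont'):\tilde{\psi}'$ for some $\tilde{\psi}' \inpsi \tilde{a}_\cont'$; the existence of such a $\tilde{\psi}'$ follows from $\wf_{\tilde{\sigma}_\cont}$ (which traces the continuation back to a reachable call site) together with $\wf_{\tilde{r}}$. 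With that adjustment, your use of Assumptions~\ref{assumption:prec_alloc} and~\ref{assumption:alloc_consistency} to align the allocated address, and of $\hat{\sigma} \simStore \tilde{\sigma}$ (from $\simXi$) to align the atomic-evaluator results, completes the argument as intended.
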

\begin{proof}
Omitted for space.
\end{proof}

\begin{lemma}[Preservation of precision for reachable configurations] 
\label{lemma:step_prec_R}
If $(\hat{r}, \hat{\sigma}) \pdToW (\hat{r}, \hat{\sigma})$, $\wf(\tilde{\xi}, (\tilde{r}, \tilde{\sigma}, \tilde{\sigma}_\cont))$,
$(\tilde{r}, \tilde{\sigma}, \tilde{\sigma}_\cont) \fsToW (\tilde{r}', \tilde{\sigma}', \tilde{\sigma}_\cont')$, and
$(\hat{r}, \hat{\sigma}) \simXi (\tilde{r}, \tilde{\sigma}, \tilde{\sigma}_\cont)$, then $\hat{r} \simR \tilde{r}'$.
\end{lemma}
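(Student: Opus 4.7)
The plan is to chain together Lemmas~\ref{lemma:stacks_have_paths}, \ref{lemma:path_til_hat}, and \ref{lemma:path_end_point} in exactly the pattern laid out in Figure~\ref{figure:logical-chain}. First, I would strengthen the well-formedness hypothesis: since $\wf(\tilde{\xi}, (\tilde{r}, \tilde{\sigma}, \tilde{\sigma}_\cont))$ holds and $(\tilde{r}, \tilde{\sigma}, \tilde{\sigma}_\cont) \fsToW (\tilde{r}', \tilde{\sigma}', \tilde{\sigma}_\cont')$, reusing the inductive step from the proof of Lemma~\ref{lemma:iterated_wf} yields $\wf((\tilde{r}, \tilde{\sigma}, \tilde{\sigma}_\cont), (\tilde{r}', \tilde{\sigma}', \tilde{\sigma}_\cont'))$. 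This gives us access to $\wf_{\tilde{r}}$ and the other subproperties for the successor analysis state, which is what the three path lemmas actually require.

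To establish $\hat{r} \simR \tilde{r}'$, I would unfold the definition of $\simR$ and fix an arbitrary $(e, \tilde{\rho}, \tilde{a}_\cont) \in \tilde{r}'$ together with an implied stack $\tilde{\psi}$ satisfying $\tilde{\psi} \inpsi \tilde{a}_\cont\ \text{(via $\tilde{\sigma}_\cont'$)}$. The obligation then reduces to producing a witness $(e, \hEnv(\tilde{\rho}), \hKont(\tilde{\psi})) \in \hat{r}$. Applying Lemma~\ref{lemma:stacks_have_paths} to this configuration with the strengthened well-formedness produces a finite-state path $((e_0, \varnothing, \tilde{a}_\text{halt}), \epsilon) \fsPath ((e, \tilde{\rho}, \tilde{a}_\cont), \tilde{\psi})$ through $(\tilde{r}', \tilde{\sigma}', \tilde{\sigma}_\cont')$. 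Then Lemma~\ref{lemma:path_til_hat}, combined with the fixed-point hypothesis $(\hat{r}, \hat{\sigma}) \pdToW (\hat{r}, \hat{\sigma})$, lifts this to an unbounded-stack path $(e_0, \varnothing, \epsilon) \pdPath (e, \hEnv(\tilde{\rho}), \hKont(\tilde{\psi}))$ via $(\hat{r}, \hat{\sigma})$. Finally, Lemma~\ref{lemma:path_end_point} applied to this path and the fixed-point hypothesis yields $(e, \hEnv(\tilde{\rho}), \hKont(\tilde{\psi})) \in \hat{r}$, discharging the simulation obligation.

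The main obstacle I expect is a subtle mismatch in the stores used at each link of the chain. Lemma~\ref{lemma:path_til_hat} needs every store lookup made along the finite-state path to be covered by the single global $\hat{\sigma}$ of the fixed-point $\hat{\xi}$, but the path produced by Lemma~\ref{lemma:stacks_have_paths} lives in $(\tilde{r}', \tilde{\sigma}', \tilde{\sigma}_\cont')$, which is strictly larger than the $(\tilde{r}, \tilde{\sigma}, \tilde{\sigma}_\cont)$ that we assumed simulated by $(\hat{r}, \hat{\sigma})$. Bridging this requires invoking Lemma~\ref{lemma:step_prec_Store} to obtain $\hat{\sigma} \simStore \tilde{\sigma}'$ before converting the path, and similarly leaning on Assumption~\ref{assumption:alloc_consistency} so that the addresses allocated along the finite-state path match the addresses that would be allocated along the corresponding unbounded-stack path in $\hat{\xi}$. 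Once store precision is threaded through and allocation is shown consistent step-by-step, the three-lemma chain closes neatly and yields the desired $\hat{r} \simR \tilde{r}'$.
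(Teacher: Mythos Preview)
Your proposal is correct and follows essentially the same three-lemma chain (Lemmas~\ref{lemma:stacks_have_paths}, \ref{lemma:path_til_hat}, \ref{lemma:path_end_point}) that the paper uses, matching Figure~\ref{figure:logical-chain} exactly. You are simply more explicit than the paper about two points the authors leave implicit: the need to first lift well-formedness to the successor $(\tilde{r}', \tilde{\sigma}', \tilde{\sigma}_\cont')$ before invoking Lemma~\ref{lemma:stacks_have_paths}, and the reliance on store precision (Lemma~\ref{lemma:step_prec_Store}) and allocation consistency when applying Lemma~\ref{lemma:path_til_hat}.
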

\begin{proof}
If we unfold the definition of $(\simR)$, we must show that for all $(e, \hat{\rho}, \hat{\cont}) \in \hat{r}'$ and 
$\tilde{\psi} \inpsi \tilde{a}_\cont\ \text{(via $\tilde{\sigma}_\cont'$)}$, that $(e, \hEnv(\tilde{\rho}), \hKont(\tilde{\psi})) \in \hat{r}$.
%
%
By lemma~\ref{lemma:stacks_have_paths}, we have:
$$((e_0, \varnothing, \tilde{a}_\text{halt}), \epsilon) \fsPath ((e, \tilde{\rho}, \tilde{a}_\cont), \tilde{\psi})\ \text{(via $\tilde{\xi}$, $(\tilde{r}, \tilde{\sigma}, \tilde{\sigma}_\cont)$)}$$
From this, by lemma~\ref{lemma:path_til_hat}, we have: $$(e_0, \varnothing, \epsilon) \pdPath (e, \hEnv(\tilde{\rho}), \hKont(\tilde{\psi}))\ \text{(via $\hat{r}$, $\hat{\sigma}$)}$$
Finally, by lemma~\ref{lemma:path_end_point}, we have our conclusion.
\end{proof}

\begin{theorem}[Precision of analysis results] 
\label{theorem:step_prec_xi}
If $\tilde{\xi}$ is the result of taking zero or more steps of $(\fsToW)$, starting from $(\{(e_0, \varnothing, \tilde{a}_\text{halt})\}, \bot, \bot)$, 
and $\hat{\xi} \pdToW \hat{\xi}$, then $\hat{\xi} \simXi \tilde{\xi}$.
\end{theorem}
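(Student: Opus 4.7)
The plan is to prove the theorem by induction on the number of $(\fsToW)$-steps taken to reach $\tilde{\xi}$ from the initial state $\tilde{\xi}_0 = (\{(e_0, \varnothing, \tilde{a}_\text{halt})\}, \bot, \bot)$. The preceding lemmas are designed so that almost all of the real work has already been discharged; this theorem is essentially the scaffolding that stitches them together.

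For the base case, $\tilde{\xi} = \tilde{\xi}_0$. I would unfold the definition of $(\simXi)$ and discharge each component separately: $\bot \simStore \bot$ is immediate, and on the configuration side we need only observe that any fixed point $\hat{\xi} = (\hat{r}, \hat{\sigma})$ of $(\pdToW)$ must contain the injected state $\hat{\mathcal{I}}(e_0) = (e_0, \varnothing, \epsilon)$, which is exactly the simulation image (under $\hEnv$ and $\hKont$) of the sole configuration in $\tilde{r}_0$ paired with its only implied stack $\epsilon$.

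For the inductive step, suppose $\tilde{\xi}' \fsToW \tilde{\xi}$ where by IH we have $\hat{\xi} \simXi \tilde{\xi}'$. Lemma~\ref{lemma:iterated_wf} gives $\wf(\tilde{\xi}'', \tilde{\xi}')$ for some $\tilde{\xi}''$, which is exactly the well-formedness hypothesis needed by Lemmas~\ref{lemma:step_prec_Store} and~\ref{lemma:step_prec_R}. Applying those lemmas with $\hat{\xi} \pdToW \hat{\xi}$ yields the store precision $\hat{\sigma} \simStore \tilde{\sigma}$ and the reachable-configuration precision $\hat{r} \simR \tilde{r}$ for $\tilde{\xi} = (\tilde{r}, \tilde{\sigma}, \tilde{\sigma}_\cont)$. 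Combining these component precisions gives $\hat{\xi} \simXi \tilde{\xi}$, closing the induction. Note that continuation stores need no separate precision clause because the $\simR$ relation, via $(\inpsi)$, already quantifies over all stacks implied by $\tilde{\sigma}_\cont$.

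The main obstacle is not in this theorem proper, but in the underlying lemmas it invokes, chiefly Lemma~\ref{lemma:step_prec_R}, whose proof threads through the entire logical chain of Figure~\ref{figure:logical-chain}. At the theorem level the only subtlety is that the IH must carry forward a simulation against the same fixed-point $\hat{\xi}$ throughout; this is fine because $\hat{\xi}$ is fixed as a hypothesis and does not vary with the induction, so there is no risk of $\hat{\xi}$ needing to grow to match additional content appearing in $\tilde{\xi}$. The key soundness-of-the-setup fact to double-check is that Lemma~\ref{lemma:iterated_wf} genuinely delivers $\wf(\_, \tilde{\xi}')$ for the intermediate $\tilde{\xi}'$ (not merely $\tilde{\xi}$), which follows from its statement applied to the predecessor iterate.
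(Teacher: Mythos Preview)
Your proposal is correct and follows essentially the same approach as the paper: induction on the number of $(\fsToW)$-steps, invoking Lemma~\ref{lemma:iterated_wf} to obtain well-formedness of the predecessor iterate, and then Lemmas~\ref{lemma:step_prec_Store} and~\ref{lemma:step_prec_R} to obtain the two components of $(\simXi)$. The paper's own proof is a one-liner (``By induction over the number of steps, trivial simplifications, unfoldings, and lemmas~\ref{lemma:iterated_wf}, \ref{lemma:step_prec_Store}, and~\ref{lemma:step_prec_R}''), and your write-up is simply a faithful expansion of that sketch, including the correct observation that well-formedness must be applied to the intermediate $\tilde{\xi}'$ rather than to $\tilde{\xi}$.
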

\begin{proof}
By induction over the number of steps, trivial simplifications, unfoldings, and lemmas~\ref{lemma:iterated_wf}, \ref{lemma:step_prec_Store}, and \ref{lemma:step_prec_R}.
\end{proof}

\section{Conclusion}
Traditional control-flow analysis has long suffered from return-flow conflation of values, even when context sensitivity and related techniques keep these
values separate across function calls. 
Recent approaches have made significant progress in addressing this problem.
However, each suffers from serious drawbacks.
PDCFA incurs a substantial development cost and causes a quadratic-factor increase in run-time complexity.
AAC is trivial to implement, but incurs an even worse increase in run-time complexity.
Our approach, however, both is simple to implement and adds no asymptotic cost to run-time complexity.
To accomplish this, we synthesize the lessons learned from PDCFA and AAC to show that the ideal continuation address is simply a function's
polyvariant entry point: its expression and abstract binding environment. 
This introspection on entry points and the corresponding choice of continuation address yields a finite-state analysis whose call transitions are
precisely matched with return transitions at no cost to either run-time or development-time overhead.

\acks{This material is partially based on research sponsored by DARPA under
agreement numbers AFRL FA8750-15-2-0092 and FA8750-12-2-0106 and by
NSF under CAREER grant 1350344.  The U.S. Government is authorized to
reproduce and distribute reprints for Governmental purposes
notwithstanding any copyright notation thereon.}

\balance
\bibliography{paper}{}
\bibliographystyle{abbrvnat}

\end{document}